\documentclass[11pt]{article}

\usepackage[margin=1in]{geometry}
\usepackage[T1]{fontenc}
\usepackage{lmodern}
\usepackage{microtype}
\usepackage{amsmath,amssymb,amsthm,mathtools}
\usepackage{enumitem}
\usepackage{aliascnt}
\usepackage{booktabs,tabularx,array}
\usepackage{xcolor}
\usepackage{tikz}
\usetikzlibrary{arrows.meta,positioning}
\usepackage[hidelinks]{hyperref}
\usepackage[nameinlink,noabbrev]{cleveref}

\newtheorem{theorem}{Theorem}[section]
\newaliascnt{lemma}{theorem}
\newtheorem{lemma}[lemma]{Lemma}
\aliascntresetthe{lemma}
\newaliascnt{proposition}{theorem}
\newtheorem{proposition}[proposition]{Proposition}
\aliascntresetthe{proposition}
\newaliascnt{corollary}{theorem}
\newtheorem{corollary}[corollary]{Corollary}
\aliascntresetthe{corollary}
\theoremstyle{definition}
\newaliascnt{definition}{theorem}

\aliascntresetthe{definition}
\newaliascnt{problem}{theorem}
\newtheorem{problem}[problem]{Problem}
\aliascntresetthe{problem}
\theoremstyle{remark}
\newaliascnt{remark}{theorem}

\aliascntresetthe{remark}
\newtheorem*{informaltheorem}{Main theorem (informal)}

\crefname{theorem}{Theorem}{Theorems}
\crefname{lemma}{Lemma}{Lemmas}
\crefname{proposition}{Proposition}{Propositions}
\crefname{corollary}{Corollary}{Corollaries}
\crefname{definition}{Definition}{Definitions}
\crefname{problem}{Problem}{Problems}
\crefname{remark}{Remark}{Remarks}
\Crefname{theorem}{Theorem}{Theorems}
\Crefname{lemma}{Lemma}{Lemmas}
\Crefname{proposition}{Proposition}{Propositions}
\Crefname{corollary}{Corollary}{Corollaries}
\Crefname{definition}{Definition}{Definitions}
\Crefname{problem}{Problem}{Problems}
\Crefname{remark}{Remark}{Remarks}

\newcommand{\cl}{\operatorname{cl}}
\newcommand{\Int}{\operatorname{Int}}
\newcommand{\MI}{\operatorname{MI}}

\newcommand{\Anc}{\operatorname{Anc}}
\newcommand{\Agr}{\operatorname{Agr}}
\newcommand{\C}{\mathcal C}
\newcommand{\M}{\mathcal M}
\newcommand{\B}{\mathcal B}
\newcommand{\G}{\mathcal G}
\newcommand{\K}{\mathbb K}
\newcommand{\set}[1]{\left\{#1\right\}}
\newcommand{\suchthat}{\,\middle|\,}
\newcommand{\problemname}[1]{\textnormal{\textsc{#1}}}
\newcolumntype{Y}{>{\raggedright\arraybackslash}X}

\setlist[itemize]{leftmargin=1.5em,itemsep=0.3em,topsep=0.4em}
\setlist[enumerate]{leftmargin=1.7em,itemsep=0.3em,topsep=0.4em}

\hypersetup{
  pdftitle={Completeness of Canonical Closure Representations Is coNP-Complete},
  pdfauthor={Mikhail Babin},
  pdfsubject={Closure systems, Horn logic, enumeration complexity, formal concept analysis, and functional dependencies},
  pdfkeywords={closure systems, implicational bases, meet-irreducibles, characteristic models, pseudo-intents, Duquenne-Guigues basis, functional dependencies, coNP-completeness}
}

\title{Completeness of Canonical Closure Representations Is coNP-Complete}
\author{Mikhail Babin\\[0.25em]
\normalsize\href{mailto:bmisterxster@gmail.com}{\texttt{bmisterxster@gmail.com}}}
\date{17 July 2026}

\begin{document}
\maketitle

\begin{abstract}
Across Horn logic, Formal Concept Analysis (FCA), convex geometries, and database theory, the same basic representation problem has remained open for about thirty years. A finite closure system on a finite set $U$ is a family of subsets that contains $U$ and is closed under intersections. It can be specified in two elementary ways. An \emph{implicational specification} lists rules $A\to b$ and consists of all subsets $X\subseteq U$ that satisfy every rule: if $A\subseteq X$, then $b\in X$. An \emph{intersection specification} lists subsets $M_1,\ldots,M_t$ and consists of all intersections of subfamilies of that list. We ask whether one specification of each kind defines exactly the same family of subsets.

In 1995, Khardon showed that efficiently translating between Horn formulas and their characteristic models is equivalent to deciding whether a proposed list of characteristic models is complete, but left the exact complexity open. Three decades later, at ISAAC 2025, the corresponding problem of enumerating irreducible closed sets from implications was still described as ``widely open,'' even for acyclic convex geometries. Closely connected open questions concerned pseudo-intents and the Duquenne--Guigues basis in FCA, and functional dependencies and Armstrong relations in database theory.

We prove that the equivalence test is coNP-complete. Hardness holds for acyclic implications whose premises contain at most three elements, even when every listed subset is correct and no listed subset can be removed without changing the closure system generated by the list. Thus the hard part is simply deciding whether a required set is missing. Unless $\mathrm P=\mathrm{NP}$, the complete canonical lists cannot in general be generated in time polynomial in the input plus the total output size, even for acyclic convex geometries. Through standard correspondences, the theorem makes Characteristic Models Identification and FD--Relation Equivalence coNP-complete and rules out such output-polynomial algorithms for Horn characteristic models, all pseudo-intents (equivalently, the Duquenne--Guigues basis), and premises of minimum functional-dependency covers.
\end{abstract}

\paragraph{Keywords.}
Closure systems; implicational bases; meet-irreducibles; definite Horn logic; characteristic models; enumeration complexity; convex geometries; pseudo-intents; Duquenne--Guigues basis; functional dependencies.

\paragraph{AI-assistance disclosure.}
The main proof was obtained by OpenAI's GPT-5.6 Pro through ChatGPT and checked by the author. The author takes full responsibility for the mathematical claims and the final manuscript.

\section{Introduction}

\subsection{A long-standing problem with several names}

A recurring task in symbolic computation is to move between two descriptions of the same finite structure. One description is a compact collection of rules. The other is a canonical list of extremal models or counterexamples. Rules support fast deduction; the canonical list supports model-based reasoning, reduced data representations, and translation into other formalisms. Because the canonical list may be exponentially larger than the rules, the natural target is output-sensitive:
\begin{quote}
Can the complete canonical list be generated in time polynomial in the input plus the output, and can a proposed list be certified as complete?
\end{quote}

Variants of this question have been studied for more than thirty years in communities that often use different terminology. A Dagstuhl seminar devoted to Horn formulas, directed hypergraphs, lattices, closure systems, database dependencies, and Formal Concept Analysis emphasized that closely related problems had long been studied in parallel~\cite{AdarichevaEtAl2014}. The four formulations most directly affected by this paper are summarized below.

\begin{center}
\footnotesize
\renewcommand{\arraystretch}{1.18}
\begin{tabularx}{\textwidth}{@{}>{\raggedright\arraybackslash\bfseries}p{0.16\textwidth}>{\raggedright\arraybackslash}p{0.34\textwidth}YY@{}}
\toprule
Area & Established formulation & Status before this work & Consequence here \\
\midrule
Horn logic and AI
& Characteristic Models Identification (CMI), forward computation (CCM), and reverse Structure Identification (SID)
& Khardon proved the three tasks output-polynomially equivalent and left the exact complexity of CMI open in 1995~\cite{Khardon1995}.
& CMI is coNP-complete; CCM and SID are not output-polynomial unless $\mathrm P=\mathrm{NP}$. \\
\addlinespace
Closure systems and convex geometries
& Enumerate all irreducible closed sets from an implicational base
& The problem was described as ``widely open'' at ISAAC 2025, even for acyclic convex geometries~\cite{DefrainOhanaVilmin2025}.
& No output-polynomial algorithm exists unless $\mathrm P=\mathrm{NP}$, even for acyclic premise-size-three bases. \\
\addlinespace
Formal Concept Analysis
& Enumerate all pseudo-intents, equivalently compute the Duquenne--Guigues basis
& Unrestricted output-polynomial enumeration was explicitly left open in 2013~\cite{BabinKuznetsov2013}; recognition of one pseudo-intent was already coNP-complete~\cite{BabinKuznetsov2010}.
& All pseudo-intents and the DG basis are not output-polynomially computable unless $\mathrm P=\mathrm{NP}$. \\
\addlinespace
Relational databases
& FD--Relation Equivalence; generate the premises of a minimum FD cover from a relation
& Equivalence was highlighted as open in 1990~\cite{GottlobLibkin1990}; unrestricted premise enumeration was left open in 2013~\cite{BabinKuznetsov2013}.
& Equivalence is coNP-complete; minimum-cover premises are not output-polynomially generable unless $\mathrm P=\mathrm{NP}$. \\
\bottomrule
\end{tabularx}
\end{center}

These are not four unrelated reductions. They share a common closure-system core, and established equivalences make completeness of the canonical representation the algorithmic bottleneck.

\begin{center}
\setlength{\fboxsep}{8pt}
\fbox{\begin{minipage}{0.92\textwidth}
\textbf{Main result in plain language.}
Even after every entry in a proposed canonical list has been individually verified as correct, deciding whether the list contains \emph{all} canonical objects is coNP-complete.
\end{minipage}}
\end{center}

\subsection{The common mathematical core}

A finite set of implications has an associated closure operator. Starting from a set of facts, attributes, or database columns, one repeatedly applies rules until no further element is forced. The fixed points form a \emph{closure system}, a family of subsets closed under intersection.

There are two complementary representations of the same closure system. An \emph{implicational base} is operational: a rule $A\to b$ says that whenever all elements of $A$ are present, $b$ must also be present. The proper \emph{meet-irreducible closed sets} form a canonical extensional representation. A meet-irreducible is a maximal closed counterexample to some consequence: it omits an element $y$, but every strict closed extension contains $y$. Every closed set is recovered as an intersection of meet-irreducibles.

The terminology changes by field. In definite Horn logic, the rule side consists of Horn clauses and the canonical model side consists of characteristic models. In Formal Concept Analysis, closure systems are the intent families of binary contexts, while pseudo-intents index the canonical Duquenne--Guigues implication basis. In database theory, the rules are functional dependencies and canonical extensional representations are expressed through Armstrong relations and agreement sets. In acyclic convex geometries, acyclic implications define the closure operator and irreducible closed sets describe its boundary.

Formally, for an implicational base $\Sigma$, let $\C_\Sigma$ be its closed-set family. For a supplied family $\M$, let $\Int(\M)$ be the family of all intersections of members of $\M$. We study
\[
  \C_\Sigma=\Int(\M)?
\]
This equality question is the decision counterpart of translating between the rule representation and the complete irreducible representation.

\subsection{Main theorem and consequences}

Our main theorem gives a negative classification of this completeness problem.

\begin{informaltheorem}
Given an implicational base $\Sigma$ and a family $\M$ of subsets of the same ground set, deciding whether
\[
  \C_\Sigma=\Int(\M)
\]
is coNP-complete. Hardness holds even when $\Sigma$ is acyclic with premises of size at most three, every member of $\M$ is a genuine meet-irreducible of $\C_\Sigma$, and
\[
  \M=\MI(\Int(\M)).
\]
\end{informaltheorem}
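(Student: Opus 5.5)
Membership in coNP comes first. The inclusion $\Int(\M)\subseteq\C_\Sigma$ holds iff every $M\in\M$ is $\Sigma$-closed, which can be checked in polynomial time by forward chaining. Here I adopt the convention that the empty intersection is $U$, and $U$ is always closed. So if $\C_\Sigma\neq\Int(\M)$, either some listed set is not closed, or there is a closed $X$ with
\[
X\neq\bigcap\set{M\in\M \suchthat X\subseteq M}.
\]
Such an $X$ is a polynomial-size certificate, verified by computing one closure and one intersection. Hence non-equivalence is in NP.

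For hardness I would reduce 3-SAT to non-equivalence, so that a missing meet-irreducible encodes a satisfying assignment. Take a formula $\varphi$ with variables $x_1,\dots,x_n$ and clauses $c_1,\dots,c_m$.
\begin{enumerate}
\item \emph{Ground set.} Let $U$ consist of the literals $x_i,\bar x_i$, a target element $y$, and auxiliary elements for the gadgets below.
\item \emph{Clause rules.} For each clause $c=(\ell_1\vee\ell_2\vee\ell_3)$, add $\{\bar\ell_1,\bar\ell_2,\bar\ell_3\}\to y$. A set of "true literals" $T$ triggers this rule exactly when $c$ is falsified.
\item \emph{Consistency rules.} Add rules such as $\{x_i,\bar x_i\}\to y$, so that $y$-free closed sets are consistent.
\item \emph{Auxiliary structure.} Add further elements and rules that make the maximal $y$-free closed sets exactly the complete satisfying assignments, padded with auxiliaries. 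Every rule points "upward" toward $y$ or toward auxiliaries, which keeps $\Sigma$ acyclic with premises of size at most three.
\end{enumerate}
Under this design $\varphi$ is satisfiable iff $\C_\Sigma$ has a meet-irreducible omitting $y$, i.e.\ a set $X_\tau$ for a satisfying $\tau$.

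The list $\M$ is then all the other meet-irreducibles. The construction must guarantee that these are polynomially many and explicitly describable, e.g.\ sets of the form $U\setminus\{\text{one literal or auxiliary}\}$ closed up appropriately. They should not depend on satisfiability of $\varphi$, and each must be verifiably meet-irreducible (a unique upper cover) and irredundant. Irredundancy then gives $\M=\MI(\Int(\M))$ automatically. If $\varphi$ is unsatisfiable, $\M=\MI(\C_\Sigma)$, and since every closed set is an intersection of meet-irreducibles, $\Int(\M)=\C_\Sigma$. If $\varphi$ is satisfiable, $X_\tau$ is a meet-irreducible of $\C_\Sigma$ not in $\M$, and a meet-irreducible cannot be an intersection of other sets of the family. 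Hence $X_\tau\notin\Int(\M)$.

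The main obstacle is the auxiliary gadget in step 4 and the matching characterization of $\MI(\C_\Sigma)$. Naively, partial assignments and sets missing several literals generate exponentially many meet-irreducibles that omit $y$. I would first try a "completeness-forcing" gadget: for each variable add an element $v_i$ with rules $x_i\to v_i$ and $\bar x_i\to v_i$, together with size-three rules that force $y$ whenever some $v_i$ is missing. The intended effect is that every $y$-free closed set with an incomplete assignment extends strictly to a $y$-free closed set, so it is not meet-irreducible. The difficulty is that "some $v_i$ is missing" is a negative condition, not directly expressible by implications, so this encoding will likely need to be replaced or reworked. I expect most of the real work to lie in making these meet-irreducibles rigid while keeping premises at size three and $\Sigma$ acyclic. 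The decisive claim to prove is a lemma classifying $\MI(\C_\Sigma)$ into the listed polynomial family plus the sets $X_\tau$.
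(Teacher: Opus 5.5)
Your coNP-membership argument is correct and matches the paper's. The hardness part has a genuine gap, and it goes beyond the unfinished gadget: the design goal of step~4 cannot be met by any construction. For any $X_\tau$ to be closed and $y$-free you need $y\notin\cl_\Sigma(\emptyset)$. But then $\cl_\Sigma(\emptyset)$ is a closed set omitting $y$. It extends to a maximal closed set omitting $y$, and that set is meet-irreducible by \cref{lem:max-omitter}. This holds for \emph{every} formula, unsatisfiable ones included. So ``$\varphi$ is satisfiable iff $\C_\Sigma$ has a meet-irreducible omitting $y$'' is false. The effect you want from the completeness-forcing gadget is also impossible here: if every $y$-free closed set with an incomplete assignment extended strictly within the $y$-free closed sets, then iterating from $\cl_\Sigma(\emptyset)$ would reach a $y$-free closed set containing a complete consistent assignment, i.e.\ a satisfying one. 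On unsatisfiable inputs this cannot happen, whatever the encoding. What the reduction actually needs is a polynomial, formula-independent family of \emph{default} $y$-avoiders. They must be meet-irreducible for every $\varphi$, be listed in $\M$, and exhaust the maximal $y$-avoiders exactly when $\varphi$ is unsatisfiable.

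The paper's missing idea is to make the target require, positively, both a falsified clause \emph{and} a choice for every variable. The clause rule feeds the head of a per-clause chain rather than the target: $A_k\to g_{k,0}$, $\{g_{k,j-1},p_j\}\to g_{k,j}$, $\{g_{k,j-1},q_j\}\to g_{k,j}$, $g_{k,n}\to x$. The defaults are then the $n$ sets $M_i^x$, each omitting $p_i,q_i$, every gate of stage at least $i$, and $x$. A maximum-stage-gate argument shows that a closed $x$-free set not inside any $M_i^x$ contains no gate and a choice for every variable, so every assignment it contains satisfies $\varphi$. No consistency rules are used, and yours would be harmful. Adding $\{p_i,q_i\}\to x$ to this construction replaces each $M_i^x$ by $2^{n-1}$ maximal $x$-avoiders (keep exactly one choice for each other variable plus all gates of stage below $i$), and all of them would have to be listed. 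Finally, the auxiliary gates carry their own meet-irreducibles. Classifying them as the explicit $O(mn^2)$ family $U\setminus E^a_{k,j}$, $U\setminus E^t_{k,j}$ via the serial recurrence of \cref{lem:recurrence} is the technical core, and your proposal leaves exactly this ``decisive claim'' open. Your closing steps are sound: an unlisted meet-irreducible of $\C_\Sigma$ is not in $\Int(\M)$, and $\M\subseteq\MI(\C_\Sigma)$ gives $\M=\MI(\Int(\M))$.
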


The two promises on $\M$ explain the strength of the result. Without them, a no-instance could be rejected for a local reason: a listed set might not be closed, might not be irreducible, or might simply be redundant. Our construction excludes all such failures. Every listed set is correct for the rule-defined system, and the supplied family is a canonical irreducible representation of the closure system that it generates. What remains hard is the global question: is the list complete for $\C_\Sigma$?

The main theorem yields four headline consequences.
\begin{enumerate}[label=\textup{(C\arabic*)}]
  \item \textbf{Horn logic.} Characteristic Models Identification for definite Horn theories is coNP-complete even when every supplied model is genuinely characteristic. By Khardon's equivalence, neither direction of the Horn/characteristic-model translation admits an output-polynomial algorithm unless $\mathrm P=\mathrm{NP}$.
  \item \textbf{Enumeration and convex geometries.} Irreducible Closed Sets Enumeration has no output-polynomial algorithm unless $\mathrm P=\mathrm{NP}$, even for acyclic premise-size-three bases and hence for acyclic convex geometries.
  \item \textbf{Formal Concept Analysis.} All pseudo-intents of a formal context cannot be enumerated in output-polynomial time unless $\mathrm P=\mathrm{NP}$. Equivalently, the Duquenne--Guigues basis cannot be computed output-polynomially in general.
  \item \textbf{Functional dependencies.} The premises of a minimum FD cover cannot be generated in output-polynomial time from an explicit relation unless $\mathrm P=\mathrm{NP}$. Moreover, FD--Relation Equivalence is coNP-complete, even when the relation already satisfies the supplied FD set.
\end{enumerate}

The only new core proof is the closure-representation theorem. We use Khardon's established Horn equivalences without repeating their proofs. The appendices contain only the transfers that are not immediate from terminology: the output-size argument for pseudo-intents and the agreement-set construction for FD--Relation Equivalence.

\subsection{Proof idea and scope}

The reduction is from $3$-\textsc{UNSAT}. For every variable $z_i$ we create two choice elements $p_i,q_i$. For every clause $C_k$, the choices that jointly falsify $C_k$ trigger an initial gate $g_{k,0}$. A serial chain then tests, stage by stage, that at least one choice has been made for every variable; reaching its terminal gate derives a distinguished target $x$.

\begin{figure}[t]
\centering
\begin{tikzpicture}[
  x=1cm,y=1cm,
  every node/.style={font=\small},
  gate/.style={draw,rounded corners=1.5pt,minimum width=1.15cm,minimum height=0.58cm,inner sep=2pt},
  arr/.style={-{Latex[length=2.2mm]},thick},
  lab/.style={midway,above=3pt,fill=white,inner sep=1.2pt,font=\footnotesize}
]
  \node[gate] (a) at (0,0) {$A_k$};
  \node[gate] (g0) at (2.15,0) {$g_{k,0}$};
  \node[gate] (g1) at (4.30,0) {$g_{k,1}$};
  \node (dots) at (6.15,0) {$\cdots$};
  \node[gate] (gn) at (8.00,0) {$g_{k,n}$};
  \node[gate] (x) at (10.15,0) {$x$};

  \draw[arr] (a) -- node[lab,align=center] {clause\\falsified} (g0);
  \draw[arr] (g0) -- node[lab] {$p_1$ or $q_1$} (g1);
  \draw[arr] (g1) -- (dots);
  \draw[arr] (dots) -- node[lab] {$p_n$ or $q_n$} (gn);
  \draw[arr] (gn) -- node[lab] {trigger} (x);
\end{tikzpicture}
\caption{One clause chain. The base rule activates the chain exactly when the clause is falsified; a present choice at each variable stage propagates the chain to the target $x$.}
\label{fig:chain}
\end{figure}

The supplied meet-irreducibles contain a polynomially classifiable family for every choice and every auxiliary gate, together with a fixed list of maximal $x$-avoiders. These objects are genuine meet-irreducibles for every formula. If the formula is unsatisfiable, the fixed list exhausts the maximal $x$-avoiders. If the formula is satisfiable, a satisfying assignment forms a closed $x$-free set that lies in none of the listed target candidates; extending it maximally produces a further meet-irreducible not present in the supplied list. The main technical work is a recurrence that classifies all auxiliary-gate cuts without circular reasoning.

The result does not settle hypergraph transversal enumeration, which remains an important special case with a different complexity status. It also does not contradict the bounded-degree algorithms of Defrain, Ohana, and Vilmin: our construction is acyclic but does not have bounded occurrence degree. Finally, the coNP-completeness of recognizing one pseudo-intent is prior work~\cite{BabinKuznetsov2010}; our new FCA consequence concerns unrestricted total enumeration of \emph{all} pseudo-intents.

The remainder of the paper proves the construction, states the cross-field consequences, and gives the two non-obvious transfer arguments in appendices.

\section{Preliminaries}

Let $U$ be a finite set. An implication over $U$ is an expression $A\to b$, where $A\subseteq U$ and $b\in U$. A finite implicational base $\Sigma$ defines the closure operator $\cl_\Sigma$ obtained by repeatedly adding $b$ whenever $A\to b\in\Sigma$ and $A$ is already present. Its closed-set family is
\[
  \C_\Sigma=\set{C\subseteq U\suchthat \cl_\Sigma(C)=C}.
\]
For a family $\mathcal F\subseteq 2^U$, put
\[
  \Int(\mathcal F)
  =\set{\bigcap\mathcal A\suchthat \mathcal A\subseteq\mathcal F},
\]
where the intersection of the empty family is $U$.

A proper closed set $M\in\C\setminus\{U\}$ is \emph{meet-irreducible} if
\[
  M=C_1\cap C_2,\qquad C_1,C_2\in\C,
\]
implies $M=C_1$ or $M=C_2$. We write $\MI(\C)$ for the family of proper meet-irreducibles.

\begin{problem}[Imp--MI Equality]
\label{prob:equality}
Given a finite set $U$, an implicational base $\Sigma$ over $U$, and a family $\M\subseteq 2^U$, decide whether
\[
  \C_\Sigma=\Int(\M).
\]
\end{problem}

\begin{lemma}[Membership in an intersection-generated family]
\label{lem:int-membership}
For $C\subseteq U$, define
\[
  J_\M(C)=\bigcap\set{M\in\M\suchthat C\subseteq M},
\]
again interpreting an empty intersection as $U$. Then
\[
  C\in\Int(\M)\quad\Longleftrightarrow\quad J_\M(C)=C.
\]
Consequently, membership in $\Int(\M)$ is decidable in polynomial time.
\end{lemma}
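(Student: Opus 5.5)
The plan is to prove the two inclusions of the equivalence separately, using only the definition of $\Int(\M)$ and the fact that $J_\M(C)$ is itself an intersection of a subfamily of $\M$.

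First, I would record two basic properties of $J_\M$. By construction, $J_\M(C)$ is the intersection of the subfamily $\mathcal A_C=\set{M\in\M\suchthat C\subseteq M}$, so $J_\M(C)\in\Int(\M)$ for every $C$. This includes the case $\mathcal A_C=\emptyset$, where $J_\M(C)=U$, the empty intersection, which belongs to $\Int(\M)$ by the stated convention. Moreover, every member of $\mathcal A_C$ contains $C$, so $C\subseteq J_\M(C)$; when $\mathcal A_C$ is empty this still holds because $C\subseteq U$.

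For the direction ($\Leftarrow$): if $J_\M(C)=C$, then $C$ is the intersection of the subfamily $\mathcal A_C\subseteq\M$, so $C\in\Int(\M)$ directly from the definition.

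For the direction ($\Rightarrow$): suppose $C=\bigcap\mathcal A$ for some $\mathcal A\subseteq\M$. Every $M\in\mathcal A$ contains $C$, hence $\mathcal A\subseteq\mathcal A_C$. Intersecting over a larger family gives a smaller set, so $J_\M(C)=\bigcap\mathcal A_C\subseteq\bigcap\mathcal A=C$. This also holds when $\mathcal A=\emptyset$, in which case $C=U$ and the inclusion into $U$ is trivial. Combined with $C\subseteq J_\M(C)$, we get equality.

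For the algorithmic consequence: given $C$ and the explicit list $\M$, test $C\subseteq M$ for each $M\in\M$, intersect the members that pass (starting from $U$), and compare the result with $C$. This takes $O(|\M|\cdot|U|)$ set operations, which is polynomial in the input size.

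No step is a real obstacle. The only point needing care is the empty-family convention, which must be used consistently in both directions; the check for $\mathcal A=\emptyset$ above handles it.
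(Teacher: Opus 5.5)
Your proof is correct and follows essentially the same route as the paper: you show $C\subseteq J_\M(C)\subseteq\bigcap\mathcal A=C$ for the forward direction, and use that $J_\M(C)$ is itself an intersection of members of $\M$ for the converse. Your explicit treatment of the empty-family convention and of the $O(|\M|\cdot|U|)$ test are additions the paper leaves implicit.
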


\begin{proof}
If $C=\bigcap\mathcal A$ for some $\mathcal A\subseteq\M$, then every member of $\mathcal A$ contains $C$, and hence
\[
  C\subseteq J_\M(C)\subseteq\bigcap\mathcal A=C.
\]
Conversely, $J_\M(C)$ is itself an intersection of members of $\M$.
\end{proof}

\begin{lemma}[Maximal omitters]
\label{lem:max-omitter}
Let $\C$ be a finite closure system on $U$, and let $M\in\C\setminus\{U\}$. Then $M\in\MI(\C)$ if and only if there is some $y\in U\setminus M$ such that $M$ is inclusion-maximal among the closed sets omitting $y$.
\end{lemma}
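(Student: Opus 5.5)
I will prove the two implications separately. Both directions use only two facts: $\C$ is finite, and it is closed under intersections.

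\emph{Only if.} Let $M\in\MI(\C)$. Since $M\neq U$, the family $\mathcal U=\set{C\in\C\suchthat M\subsetneq C}$ is nonempty, because it contains $U$. It is also finite, so it has at least one minimal member. I claim it has exactly one minimal member $M^*$.

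Suppose instead that $C_1\neq C_2$ are two distinct minimal members of $\mathcal U$. Then $C_1\cap C_2\in\C$ and $M\subseteq C_1\cap C_2$. If $C_1\cap C_2\neq M$, then $C_1\cap C_2\in\mathcal U$. It is contained in $C_1$, so minimality forces $C_1\cap C_2=C_1$, and likewise $C_1\cap C_2=C_2$, contradicting $C_1\neq C_2$. Hence $M=C_1\cap C_2$ with $M\neq C_1$ and $M\neq C_2$, which contradicts meet-irreducibility.

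So $M^*$ is the unique minimal member of $\mathcal U$. Every $C\in\mathcal U$ contains some minimal member of $\mathcal U$, hence contains $M^*$. Now choose $y\in M^*\setminus M$. Then $M$ omits $y$, and every closed $C\supsetneq M$ contains $M^*$ and therefore contains $y$. Thus $M$ is inclusion-maximal among the closed sets omitting $y$.

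\emph{If.} Suppose $M\in\C\setminus\{U\}$ is inclusion-maximal among the closed sets omitting some $y\notin M$. Let $M=C_1\cap C_2$ with $C_1,C_2\in\C$. Since $y\notin M$, the element $y$ is missing from $C_1$ or from $C_2$; say $y\notin C_1$. We have $M\subseteq C_1$, and $C_1$ is a closed set omitting $y$, so maximality gives $C_1=M$. Hence $M$ is meet-irreducible.

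The only step that needs care is showing that $\mathcal U$ has a unique minimal member. This is exactly where meet-irreducibility is used, and where one must check that $C_1\cap C_2$ differs from both $C_1$ and $C_2$. The rest is bookkeeping.
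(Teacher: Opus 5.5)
Your proof is correct and follows essentially the same route as the paper. In the only-if direction you show that $M$ has a unique minimal strict closed superset (a unique cover) and take $y$ from that cover minus $M$. The if direction is the same maximality argument, which you phrase directly rather than by contradiction.
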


\begin{proof}
Suppose $M$ is maximal among the closed sets omitting $y$. If $M=C_1\cap C_2$ with $C_1,C_2\in\C$ both strictly larger than $M$, maximality gives $y\in C_1\cap C_2=M$, a contradiction. Thus $M$ is meet-irreducible.

Conversely, let $M$ be meet-irreducible. Since $\C$ is finite and $M\neq U$, $M$ has a cover. It has only one. Indeed, if $D_1,D_2$ were distinct covers of $M$, then $D_1\cap D_2$ would be a closed set between $M$ and each $D_i$. The cover property would force $D_1\cap D_2=M$, contradicting meet-irreducibility. Let $M^+$ be the unique cover and choose $y\in M^+\setminus M$. Every strict closed superset of $M$ contains a cover of $M$, hence contains $M^+$ and therefore $y$. Thus $M$ is maximal among the closed sets omitting $y$.
\end{proof}

\begin{lemma}[One-element maximality test]
\label{lem:one-element}
Let $C$ be $\Sigma$-closed and let $y\notin C$. Then $C$ is maximal among the closed sets omitting $y$ if and only if
\[
  y\in\cl_\Sigma(C\cup\{z\})
  \qquad\text{for every }z\in U\setminus C.
\]
\end{lemma}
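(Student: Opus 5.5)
The plan is to prove both directions straight from the definitions, using only that $\cl_\Sigma$ is a closure operator on $U$. It is extensive, monotone, idempotent, and $\cl_\Sigma(S)$ is the least closed set containing $S$. The key observation is that a closed set strictly containing $C$ must contain some $z\in U\setminus C$, so it contains $C\cup\{z\}$. Hence it contains the closed set $\cl_\Sigma(C\cup\{z\})$.

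\emph{($\Rightarrow$)} Assume $C$ is maximal among the closed sets omitting $y$, and fix $z\in U\setminus C$. Then $D=\cl_\Sigma(C\cup\{z\})$ is closed and contains $C\cup\{z\}$, so $D\supsetneq C$. Maximality forces $y\in D$, which is the stated condition.

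\emph{($\Leftarrow$)} Assume $y\in\cl_\Sigma(C\cup\{z\})$ for every $z\in U\setminus C$. First, $C$ itself is closed and omits $y$, so it is a candidate. Next, let $D$ be any closed set with $D\supsetneq C$, and pick $z\in D\setminus C$. By monotonicity and closedness of $D$ we get $\cl_\Sigma(C\cup\{z\})\subseteq\cl_\Sigma(D)=D$, and therefore $y\in D$. So no closed set strictly above $C$ omits $y$, and $C$ is maximal.

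I expect no real obstacle here. The only point needing care is the monotonicity of $\cl_\Sigma$ and the fact that $\cl_\Sigma(S)$ is the least closed superset of $S$. Both follow directly from the iterative definition of $\cl_\Sigma$: an induction on the rule applications shows that every element added to $S$ lies in any closed set containing $S$.

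It is also worth noting the algorithmic content, which later arguments will use. The test requires only $|U|$ closure computations, each polynomial (indeed linear-time forward chaining). Combined with \cref{lem:max-omitter}, this makes membership of a given set in $\MI(\C_\Sigma)$ decidable in polynomial time.
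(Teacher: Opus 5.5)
Your proof is correct and matches the paper's argument. For the forward direction you use that $\cl_\Sigma(C\cup\{z\})$ is a strict closed extension of $C$. For the converse you pick $z\in D\setminus C$ in an arbitrary strict closed extension $D$ and use $y\in\cl_\Sigma(C\cup\{z\})\subseteq D$.
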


\begin{proof}
The forward implication is immediate because $\cl_\Sigma(C\cup\{z\})$ is a strict closed extension of $C$. Conversely, if $D$ is any strict closed extension of $C$, choose $z\in D\setminus C$. Then
\[
  y\in\cl_\Sigma(C\cup\{z\})\subseteq D.
\]
\end{proof}

\begin{lemma}[Generation by meet-irreducibles]
\label{lem:mi-generation}
Every finite closure system satisfies
\[
  \C=\Int(\MI(\C)).
\]
\end{lemma}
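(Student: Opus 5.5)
We prove the two inclusions separately. Since every member of $\MI(\C)$ belongs to $\C$ and $\C$ is closed under intersections, every intersection of a subfamily of $\MI(\C)$ lies in $\C$. The empty intersection, interpreted as $U$, also lies in $\C$, because a closure system contains $U$. This gives $\Int(\MI(\C))\subseteq\C$.

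For the reverse inclusion, fix $C\in\C$. If $C=U$, it is the empty intersection and we are done. Otherwise, consider the family
\[
  \mathcal A=\set{M\in\MI(\C)\suchthat C\subseteq M}.
\]
We aim to show $\bigcap\mathcal A=C$. By \cref{lem:int-membership}, this is the same as checking $J_{\MI(\C)}(C)=C$. The inclusion $C\subseteq\bigcap\mathcal A$ is immediate, so it remains to show that every $y\in U\setminus C$ is omitted by some member of $\mathcal A$.

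This separation step carries the argument, and finiteness is what makes it work. Fix $y\notin C$. The family of closed sets that contain $C$ and omit $y$ is nonempty, since it contains $C$, and it is finite. Hence it has an inclusion-maximal member $M\supseteq C$. We claim $M$ is maximal among \emph{all} closed sets omitting $y$. Indeed, any closed $D\supsetneq M$ omitting $y$ also contains $C$, so it would contradict the choice of $M$. By \cref{lem:max-omitter}, $M\in\MI(\C)$; it is proper because $y\notin M$. Thus $M\in\mathcal A$ and $y\notin M$, so $y\notin\bigcap\mathcal A$.

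Consequently $\bigcap\mathcal A\subseteq C$, hence $C=\bigcap\mathcal A\in\Int(\MI(\C))$, which completes the proof. The only point needing care is the passage from maximality relative to supersets of $C$ to maximality in the sense of \cref{lem:max-omitter}, and that passage follows from the one-line argument above.
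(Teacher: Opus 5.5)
Your proposal is correct and essentially the paper's argument: for each $y\notin C$ you extend $C$ to a maximal closed set omitting $y$, apply \cref{lem:max-omitter}, and intersect, with the reverse inclusion coming from intersection-closure. You only make explicit what the paper leaves implicit, namely that a set maximal among the closed supersets of $C$ omitting $y$ is also maximal among all closed sets omitting $y$.
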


\begin{proof}
Fix $C\in\C$. For every $y\in U\setminus C$, extend $C$ to a maximal closed set $M_y$ still omitting $y$. By \cref{lem:max-omitter}, $M_y$ is meet-irreducible. Since $C\subseteq M_y$ and $y\notin M_y$,
\[
  C=\bigcap_{y\in U\setminus C}M_y.
\]
For $C=U$, this is the empty intersection. The reverse inclusion follows because $\C$ is intersection-closed.
\end{proof}

\begin{lemma}[Subfamilies of ambient meet-irreducibles]
\label{lem:subfamily}
Let $\C$ be a finite closure system and let $\mathcal A\subseteq\MI(\C)$. Then
\[
  \MI(\Int(\mathcal A))=\mathcal A.
\]
\end{lemma}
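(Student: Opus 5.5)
Write $\mathcal D=\Int(\mathcal A)$. Since $\mathcal A\subseteq\MI(\C)\subseteq\C$ and $\C$ is intersection-closed, $\mathcal D\subseteq\C$. Also $\mathcal D$ is itself a closure system on $U$: it contains $U$ as the empty intersection, and it is closed under intersections. The plan is to prove the two inclusions $\MI(\mathcal D)\subseteq\mathcal A$ and $\mathcal A\subseteq\MI(\mathcal D)$ separately.

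\emph{Inclusion $\MI(\mathcal D)\subseteq\mathcal A$.} Let $M\in\MI(\mathcal D)$. Since $M\in\mathcal D$, we have $M=\bigcap\mathcal B$ for some $\mathcal B\subseteq\mathcal A$. Because $M\neq U$, $\mathcal B$ is nonempty, and we may take it finite. We then induct on the length of the intersection using meet-irreducibility in $\mathcal D$. Write $M=A_1\cap(A_2\cap\dots\cap A_r)$; both factors lie in $\mathcal D$, so $M$ equals one of them. Repeating this, $M$ equals some $A_i\in\mathcal A$. This inclusion is essentially routine.

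\emph{Inclusion $\mathcal A\subseteq\MI(\mathcal D)$.} Let $A\in\mathcal A$. Then $A\in\mathcal D$, and $A\neq U$ because members of $\MI(\C)$ are proper. By \cref{lem:max-omitter} applied to $\C$, there is some $y\notin A$ such that $A$ is maximal among the sets in $\C$ omitting $y$. Every strict superset of $A$ in $\mathcal D$ lies in $\C$, so it must contain $y$. Hence $A$ is also maximal among the members of $\mathcal D$ omitting $y$. Applying \cref{lem:max-omitter} again, this time to the finite closure system $\mathcal D$, gives $A\in\MI(\mathcal D)$.

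The main point to check is the transfer of maximality from $\C$ down to the subsystem $\mathcal D$. It works precisely because $\mathcal D\subseteq\C$: a witness of non-maximality in $\mathcal D$ would also be a witness in $\C$. The one other detail is that $\mathcal D$ must genuinely be a closure system, so that \cref{lem:max-omitter} applies to it; this holds because $U$ is included as the empty intersection.
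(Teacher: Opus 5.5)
Your proof is correct and has the same two-inclusion structure as the paper's; your peeling induction for $\MI(\mathcal D)\subseteq\mathcal A$ is the same idea as the paper's choice of an inclusion-minimal subfamily $\mathcal B\subseteq\mathcal A$ with $N=\bigcap\mathcal B$. For $\mathcal A\subseteq\MI(\mathcal D)$, the paper skips the detour through \cref{lem:max-omitter}: any decomposition $A=D_1\cap D_2$ with $D_1,D_2\in\mathcal D$ is already a decomposition in $\C$, so meet-irreducibility passes down to $\mathcal D$ directly from the definition.
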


\begin{proof}
Because $\C$ is intersection-closed, $\Int(\mathcal A)\subseteq\C$. If $A\in\mathcal A$ and $A=D_1\cap D_2$ with $D_1,D_2\in\Int(\mathcal A)$, meet-irreducibility of $A$ in $\C$ gives $A=D_1$ or $A=D_2$. Thus every member of $\mathcal A$ is meet-irreducible in $\Int(\mathcal A)$.

Conversely, let $N$ be a proper meet-irreducible of $\Int(\mathcal A)$. Choose an inclusion-minimal nonempty subfamily $\mathcal B\subseteq\mathcal A$ with $N=\bigcap\mathcal B$. If $|\mathcal B|\ge2$, choose $B\in\mathcal B$. Minimality gives
\[
  N\subsetneq B,
  \qquad
  N\subsetneq\bigcap(\mathcal B\setminus\{B\}),
\]
and these two strict supersets intersect to $N$, a contradiction. Hence $|\mathcal B|=1$ and $N\in\mathcal A$.
\end{proof}

The dependency digraph $G_\Sigma$ has an arc $a\to b$ whenever $a$ belongs to the premise of an implication with conclusion $b$. Let $\Anc_\Sigma(y)$ be the set consisting of $y$ and all vertices from which a directed path reaches $y$.

\begin{lemma}[Dependency locality]
\label{lem:locality}
For every $S\subseteq U$ and $y\in U$,
\[
  y\in\cl_\Sigma(S)
  \quad\Longleftrightarrow\quad
  y\in\cl_\Sigma\bigl(S\cap\Anc_\Sigma(y)\bigr).
\]
In particular, if $z\notin\Anc_\Sigma(y)$, inserting or deleting $z$ from the initial set does not affect derivability of $y$.
\end{lemma}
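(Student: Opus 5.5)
We prove the displayed equivalence by comparing forward-chaining derivations. Put $A=\Anc_\Sigma(y)$. The direction from right to left is immediate from monotonicity of $\cl_\Sigma$: since $S\cap A\subseteq S$, we have $\cl_\Sigma(S\cap A)\subseteq\cl_\Sigma(S)$, so derivability of $y$ from $S\cap A$ gives derivability from $S$.

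For the forward direction, we use the standard description of $\cl_\Sigma(S)$ as the union of the sets $S=S_0\subseteq S_1\subseteq\cdots$. Here $S_{j+1}$ is $S_j$ together with all conclusions $b$ of implications $B\to b\in\Sigma$ with $B\subseteq S_j$. Let $T_j$ be the analogous sequence started from $S\cap A$. The key claim, proved by induction on $j$, is
\[
  S_j\cap A\subseteq T_j .
\]

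The base case $j=0$ is trivial. For the inductive step, take $b\in (S_{j+1}\setminus S_j)\cap A$. Then some implication $B\to b\in\Sigma$ has $B\subseteq S_j$. Every $a\in B$ has an arc $a\to b$ in $G_\Sigma$. Since $b\in A$, either $b=y$ or $b$ reaches $y$ along a directed path, so $a$ reaches $y$ and $a\in A$. Hence $B\subseteq S_j\cap A\subseteq T_j$ by the induction hypothesis, and therefore $b\in T_{j+1}$. Elements of $S_j\cap A$ lie in $T_j\subseteq T_{j+1}$ already.

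Since $y\in A$, it follows that $y\in\cl_\Sigma(S)$ gives $y\in S_j\cap A\subseteq T_j\subseteq\cl_\Sigma(S\cap A)$ for some $j$.

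The ``in particular'' clause follows by applying the equivalence to both $S$ and $S\triangle\{z\}$. If $z\notin A$, then
\[
  (S\cup\{z\})\cap A=(S\setminus\{z\})\cap A=S\cap A,
\]
so $y$ is derivable from each of these initial sets exactly when it is derivable from $S\cap A$. The only point needing care is the closure of $A$ under premises of implications whose conclusion lies in $A$, and this holds by the definition of $\Anc_\Sigma(y)$ via $G_\Sigma$.
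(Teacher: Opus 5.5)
Your proof is correct and takes essentially the same route as the paper. Both arguments rest on the fact that $\Anc_\Sigma(y)$ contains the premise of every implication whose conclusion lies in it; your stagewise induction $S_j\cap A\subseteq T_j$ is a more formal version of the paper's step of deleting every derived element outside $\Anc_\Sigma(y)$ from a forward-chaining derivation of $y$.
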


\begin{proof}
Only the forward implication needs proof. Take a forward-chaining derivation of $y$ from $S$ and delete every derived element outside $\Anc_\Sigma(y)$. If $b\in\Anc_\Sigma(y)$ was obtained from $A\to b$, then every $a\in A$ has an arc to $b$ and therefore also lies in $\Anc_\Sigma(y)$. Hence the filtered sequence is still a derivation of $y$, now from $S\cap\Anc_\Sigma(y)$.
\end{proof}

\begin{corollary}
\label{cor:nonancestor}
If $C$ is maximal closed among the sets omitting $y$, then every $z\notin\Anc_\Sigma(y)$ belongs to $C$.
\end{corollary}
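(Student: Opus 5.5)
The corollary follows from the two preceding lemmas; I would argue by contradiction. Let $C$ be maximal among the closed sets omitting $y$. Suppose some $z\notin\Anc_\Sigma(y)$ does not belong to $C$. By \cref{lem:one-element}, maximality of $C$ gives
\[
  y\in\cl_\Sigma(C\cup\{z\}).
\]

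Next I would apply \cref{lem:locality} with $S=C\cup\{z\}$. It yields
\[
  y\in\cl_\Sigma\bigl((C\cup\{z\})\cap\Anc_\Sigma(y)\bigr).
\]
Since $z\notin\Anc_\Sigma(y)$, we have $(C\cup\{z\})\cap\Anc_\Sigma(y)=C\cap\Anc_\Sigma(y)\subseteq C$. Closure operators are monotone, so $y\in\cl_\Sigma(C)$. Because $C$ is closed, $\cl_\Sigma(C)=C$, and hence $y\in C$. This contradicts the assumption that $C$ omits $y$. Equivalently, one can use the ``in particular'' clause of \cref{lem:locality} directly: deleting $z$ from the initial set $C\cup\{z\}$ leaves derivability of $y$ unchanged.

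No step is a real obstacle. The only point needing care is monotonicity of $\cl_\Sigma$, which is immediate from the forward-chaining definition: a derivation from a smaller initial set remains a derivation from any superset.
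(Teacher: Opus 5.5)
Your proof is correct and takes the same approach as the paper: \cref{lem:one-element} gives $y\in\cl_\Sigma(C\cup\{z\})$, and \cref{lem:locality} then removes $z$, which contradicts $C$ being closed with $y\notin C$. The only difference is that you spell out the monotonicity step, which the paper leaves implicit.
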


\begin{proof}
Otherwise \cref{lem:locality} would imply that $\cl_\Sigma(C\cup\{z\})$ still omits $y$, contradicting \cref{lem:one-element}.
\end{proof}

\begin{proposition}
\label{prop:conp-membership}
\problemname{Imp--MI Equality} belongs to coNP.
\end{proposition}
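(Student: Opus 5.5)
To show that \problemname{Imp--MI Equality} lies in coNP, I will give a polynomial-size certificate for \emph{inequality}, $\C_\Sigma\neq\Int(\M)$, together with a polynomial-time verifier. The certificate is a single set $C\subseteq U$ witnessing the failure of one of the two inclusions. The verifier accepts if and only if exactly one of the following holds:
\[
  \cl_\Sigma(C)=C
  \quad\text{or}\quad
  J_\M(C)=C .
\]
Both quantities are computable in polynomial time. The closure $\cl_\Sigma(C)$ comes from forward chaining, which takes at most $|U|$ rounds, each scanning $\Sigma$. The operator $J_\M(C)$ is an intersection over at most $|\M|$ listed sets, and by \cref{lem:int-membership} the condition $J_\M(C)=C$ is equivalent to $C\in\Int(\M)$.

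Soundness is immediate. If exactly one condition holds, then $C$ lies in exactly one of the families $\C_\Sigma$ and $\Int(\M)$, so the two families differ.

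For completeness, suppose $\C_\Sigma\neq\Int(\M)$. Then some subset $C\subseteq U$ lies in the symmetric difference of the two families. For that $C$, exactly one of the two tests succeeds, and $C$ is written with $|U|$ bits. The complement of the problem is therefore in NP, and the problem itself is in coNP.

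There is no serious obstacle. The only points to check are that $\cl_\Sigma$ is computed correctly in polynomial time, and that the empty-intersection convention (giving $U$) is handled consistently in $J_\M$ and in $\Int(\M)$, so that $C=U$ is treated correctly. The set $U$ is always $\Sigma$-closed, and \cref{lem:int-membership} already covers it on the $\Int(\M)$ side, so no special case is needed.
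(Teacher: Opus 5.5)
Your proof is correct and is the paper's argument: the certificate is a set $C$ in the symmetric difference of $\C_\Sigma$ and $\Int(\M)$, $\Sigma$-closedness is checked directly, and membership in $\Int(\M)$ is checked via $J_\M(C)=C$ from \cref{lem:int-membership}. Computing the full closure by forward chaining is more than the verifier needs, since it suffices to check that no implication with premise inside $C$ has its conclusion outside $C$, but both checks run in polynomial time.
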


\begin{proof}
A no-certificate is a set $C\subseteq U$ belonging to exactly one of $\C_\Sigma$ and $\Int(\M)$. Membership in $\C_\Sigma$ is checked by scanning the implications, and membership in $\Int(\M)$ is checked by \cref{lem:int-membership}.
\end{proof}

\section{The reduction}

We reduce from $3$-\textsc{UNSAT}, which is coNP-complete~\cite{GareyJohnson1979}. We use clauses of size at most three. We may assume, without loss of generality, that the formula has at least one variable and at least one clause, and that every clause is nonempty, contains no repeated literal, and is not tautological. Here is an explicit polynomial normalization. If the input contains an empty clause, map it to the fixed unsatisfiable formula $(y)\wedge(\neg y)$. Otherwise delete tautological clauses, delete repeated copies of a literal inside each remaining clause, and conjoin a fresh unit clause $(y)$. Finally, relabel the variables that actually occur. The resulting formula is satisfiable exactly when the input is satisfiable and satisfies all the stated conditions.

Let
\[
  \varphi=C_1\wedge\cdots\wedge C_m
\]
be a normalized formula on variables $z_1,\ldots,z_n$, where $m,n\ge1$. For each variable introduce two choice elements
\[
  p_i\quad\text{and}\quad q_i,
\]
representing $z_i=1$ and $z_i=0$, respectively. Define the choice that falsifies a literal by
\[
  f(z_i)=q_i,
  \qquad
  f(\neg z_i)=p_i.
\]
For each clause $C_k$, put
\[
  A_k=\set{f(\ell)\suchthat \ell\in C_k}.
\]
Then $1\le |A_k|\le3$, and $A_k$ contains at most one of $p_i,q_i$ for every $i$. A truth assignment $\alpha$ is represented by
\[
  S_\alpha
  =\set{p_i\suchthat \alpha(z_i)=1}
   \cup
   \set{q_i\suchthat \alpha(z_i)=0}.
\]
By construction,
\begin{equation}
  \alpha\text{ falsifies }C_k
  \quad\Longleftrightarrow\quad
  A_k\subseteq S_\alpha.
  \label{eq:falsifies}
\end{equation}

For every clause $k$, introduce a chain
\[
  g_{k,0},g_{k,1},\ldots,g_{k,n},
\]
and introduce one final target $x$. The ground set is
\[
  U=
  \set{p_i,q_i:1\le i\le n}
  \cup
  \set{g_{k,j}:1\le k\le m,\ 0\le j\le n}
  \cup\{x\}.
\]
The base $\Sigma_\varphi$ consists of
\begin{align}
  A_k &\to g_{k,0} &&(1\le k\le m), \label{eq:base}\\
  \{g_{k,j-1},p_j\} &\to g_{k,j},
  &\{g_{k,j-1},q_j\}&\to g_{k,j}
  &&(1\le k\le m,\ 1\le j\le n), \label{eq:step}\\
  \{g_{k,n}\} &\to x &&(1\le k\le m). \label{eq:terminal}
\end{align}
Every conclusion is a singleton and every premise has size at most three. The rank function
\[
  \rho(p_i)=\rho(q_i)=0,
  \qquad
  \rho(g_{k,j})=j+1,
  \qquad
  \rho(x)=n+2
\]
strictly increases along every dependency arc, so the dependency digraph is acyclic. For the rest of the proof, write $\cl=\cl_{\Sigma_\varphi}$ and $\C_\varphi=\C_{\Sigma_\varphi}$.

\subsection{The supplied family}

We describe the supplied closed sets by their complements, called cuts.

For each $i$, include the choice complements
\[
  P_i=U\setminus\{p_i\},
  \qquad
  Q_i=U\setminus\{q_i\}.
\]
For a clause $k$, a stage $0\le j\le n$, and $a\in A_k$, define
\[
  E_{k,j}^{a}=\{a,g_{k,0},g_{k,1},\ldots,g_{k,j}\}.
\]
For $1\le t\le j$, define
\[
  E_{k,j}^{t}=\{p_t,q_t,g_{k,t},g_{k,t+1},\ldots,g_{k,j}\}.
\]
We include the complements of all these gate cuts.

Finally, for $1\le i\le n$, define
\[
  E_i^x
  =\{x,p_i,q_i\}
   \cup
   \set{g_{k,j}:1\le k\le m,\ i\le j\le n},
  \qquad
  M_i^x=U\setminus E_i^x.
\]
The supplied family is
\begin{align*}
  \M_\varphi={}&
  \set{P_i,Q_i:1\le i\le n}\\
  &\cup\set{U\setminus E_{k,j}^a:
      1\le k\le m,\ 0\le j\le n,\ a\in A_k}\\
  &\cup\set{U\setminus E_{k,j}^t:
      1\le k\le m,\ 1\le t\le j\le n}\\
  &\cup\set{M_i^x:1\le i\le n}.
\end{align*}

\section{Classification of the supplied meet-irreducibles}

\subsection{Choice elements}

\begin{lemma}
\label{lem:choice}
For every $i$, $P_i$ is the unique maximal closed set omitting $p_i$, and $Q_i$ is the unique maximal closed set omitting $q_i$.
\end{lemma}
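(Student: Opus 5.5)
The plan is to show that $P_i=U\setminus\{p_i\}$ is $\Sigma_\varphi$-closed. Once that is known, uniqueness and maximality are immediate: $P_i$ contains every element of $U$ other than $p_i$, so any closed set omitting $p_i$ is a subset of $P_i$. Hence $P_i$ is then the greatest, and in particular the unique maximal, closed set omitting $p_i$. The same argument, with $q_i$ in place of $p_i$, handles $Q_i$.

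For closedness I would use the fact that a set is closed exactly when no implication of $\Sigma_\varphi$ has its premise inside the set and its conclusion outside. So the whole check reduces to one observation: no implication has conclusion $p_i$. Inspecting \eqref{eq:base}, \eqref{eq:step} and \eqref{eq:terminal}, every conclusion is either a gate $g_{k,j}$ or the target $x$. Equivalently, in the rank function $\rho$ the choice elements have rank $0$, and they have no incoming arcs in $G_{\Sigma_\varphi}$.

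Therefore every implication whose premise lies in $P_i$ has its conclusion in $U\setminus\{p_i\}=P_i$, and $\cl(P_i)=P_i$. This also gives $P_i\neq U$, so $P_i$ is a proper closed set, and by \cref{lem:max-omitter} it is a meet-irreducible.

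I do not expect any real obstacle here. The only point needing care is to state explicitly that choice elements never occur as conclusions of the base, since that is what makes the complement of a single choice element closed.
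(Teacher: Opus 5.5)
Your proposal is correct and matches the paper's proof. Both rest on the single observation that no implication of $\Sigma_\varphi$ concludes a choice element. Hence $U\setminus\{p_i\}$ (resp.\ $U\setminus\{q_i\}$) is closed, and it contains every set omitting $p_i$ (resp.\ $q_i$), so it is the greatest, and therefore the unique maximal, such closed set.
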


\begin{proof}
No implication concludes a choice element. Hence $P_i$ and $Q_i$ are closed, and every set omitting the relevant choice is contained in the corresponding complement.
\end{proof}

\subsection{Auxiliary gates}

Fix a clause index $k$. For $0\le j\le n$, let $\B_{k,j}$ be the family of complements of maximal closed sets omitting $g_{k,j}$.

\begin{lemma}[Base gate]
\label{lem:base-gate}
For every $k$,
\[
  \B_{k,0}=\set{\{g_{k,0},a\}:a\in A_k}.
\]
\end{lemma}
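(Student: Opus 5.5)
We need to show that the complements of maximal closed sets omitting $g_{k,0}$ are exactly the sets $\{g_{k,0},a\}$ with $a\in A_k$. The plan is to pin down the ancestor set first, then verify the candidates via \cref{lem:one-element}, and finally prove the converse by a minimality argument on cuts.

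\textbf{Ancestors.} The only implication concluding $g_{k,0}$ is \eqref{eq:base}, and no implication concludes a choice element. Hence
\[
  \Anc_{\Sigma_\varphi}(g_{k,0})=\{g_{k,0}\}\cup A_k .
\]
By \cref{cor:nonancestor}, every maximal closed set omitting $g_{k,0}$ contains $U\setminus(\{g_{k,0}\}\cup A_k)$. Its complement $E$ therefore satisfies $g_{k,0}\in E\subseteq\{g_{k,0}\}\cup A_k$.

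\textbf{Each candidate is a maximal omitter.} Fix $a\in A_k$ and put $C=U\setminus\{g_{k,0},a\}$. We check the three kinds of implications.
\begin{itemize}
  \item The base rule $A_k\to g_{k,0}$ is inactive on $C$, since $a\in A_k$ is missing from $C$.
  \item Every other implication concludes an element of $C$: the other base gates $g_{k',0}$ with $k'\neq k$, the gates $g_{k',j}$ with $j\ge1$, or $x$.
  \item No implication concludes $a$.
\end{itemize}
So $C$ is closed. For \cref{lem:one-element}, the only elements outside $C$ other than $g_{k,0}$ are $a$ itself. Adding $a$ gives $A_k\subseteq C\cup\{a\}$, so the base rule fires and $g_{k,0}\in\cl(C\cup\{a\})$. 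Thus $C$ is maximal among closed sets omitting $g_{k,0}$.

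\textbf{Every maximal omitter has this form.} Let $C$ be a maximal closed set omitting $g_{k,0}$, with complement $E$. Since $C$ is closed and omits $g_{k,0}$, we have $A_k\not\subseteq C$, so some $a\in A_k$ lies in $E$. Then $C\subseteq U\setminus\{g_{k,0},a\}$, and the latter is closed and omits $g_{k,0}$ by the previous step. Maximality of $C$ forces equality, so $E=\{g_{k,0},a\}$.

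\textbf{Main obstacle.} The argument is short. The only point that needs care is closedness of the candidate complement, specifically that no other rule is blocked. That holds because every other conclusion already lies in $C$.
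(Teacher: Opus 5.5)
Your proof is correct and follows the paper's argument. Your converse step is exactly the paper's: $U\setminus\{g_{k,0},a\}$ is closed because only the base rule concludes $g_{k,0}$ and $a$ blocks it, and it is maximal because restoring $a$ completes $A_k$. The one difference is in the forward direction: you note $C\subseteq U\setminus\{g_{k,0},a\}$ for some missing $a\in A_k$ and let maximality force equality, whereas the paper argues separately that exactly one member of $A_k$ and nothing else is missing; your version is slightly cleaner, and your ancestor paragraph, while correct, is never used.
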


\begin{proof}
Let $C$ be maximal closed subject to $g_{k,0}\notin C$. The rule $A_k\to g_{k,0}$ forces at least one $a\in A_k$ to be absent. At most one member of $A_k$ is absent: if distinct $a,b\in A_k$ were both missing, then $\cl(C\cup\{a\})$ would still miss $b$, so the only rule concluding $g_{k,0}$ would remain blocked, contradicting maximality. Fix the unique missing $a$.

No element other than $g_{k,0}$ and $a$ can be missing. Adding any further missing element would leave $a$ absent and would still not derive $g_{k,0}$. Hence $U\setminus C=\{g_{k,0},a\}$.

Conversely, $U\setminus\{g_{k,0},a\}$ is closed because the base rule is blocked by $a$. Restoring $g_{k,0}$ includes it directly, and restoring $a$ completes $A_k$ and derives it. Thus the complement is maximal $g_{k,0}$-free.
\end{proof}

\begin{lemma}[Serial recurrence]
\label{lem:recurrence}
For every $1\le j\le n$,
\[
  \B_{k,j}
  =
  \set{\{g_{k,j},p_j,q_j\}}
  \cup
  \set{\{g_{k,j}\}\cup B:B\in\B_{k,j-1}}.
\]
\end{lemma}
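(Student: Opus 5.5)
The approach is to prove the two inclusions separately, using \cref{lem:one-element} as the maximality test and \cref{lem:locality} to confine attention to ancestors of $g_{k,j}$. The ancestors of $g_{k,j}$ are the chain $g_{k,0},\ldots,g_{k,j}$, the elements of $A_k$, and $p_t,q_t$ for $t\le j$. The only rules concluding $g_{k,j}$ are the two step rules $\{g_{k,j-1},p_j\}\to g_{k,j}$ and $\{g_{k,j-1},q_j\}\to g_{k,j}$. Let $C$ be maximal closed omitting $g_{k,j}$, and let $E=U\setminus C$. Closedness of $C$ forces either $g_{k,j-1}\in E$ or $\{p_j,q_j\}\subseteq E$.

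\textbf{Case 1: $p_j,q_j\in E$.} Then $C\subseteq U\setminus\{g_{k,j},p_j,q_j\}$. This set is closed: removing $g_{k,j}$ only affects the rules with $g_{k,j}$ in the premise. If $j<n$, these rules would conclude $g_{k,j+1}$, but that element is present, so closedness is not violated. The only other relevant rules conclude $g_{k,j}$, and both are blocked. Moreover, adding back any of the three removed elements derives $g_{k,j}$. Indeed, $g_{k,j-1}$ is present, so restoring $p_j$ or $q_j$ fires a step rule. By \cref{lem:one-element} the set is maximal, so $E=\{g_{k,j},p_j,q_j\}$.

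\textbf{Case 2: exactly one of $p_j,q_j$, or neither, lies in $E$.} Then $g_{k,j-1}\in E$, and I claim $C$ is maximal omitting $g_{k,j-1}$.

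\begin{itemize}
\item First suppose both $p_j,q_j\in C$. For any $z\notin C$ with $z\ne g_{k,j}$, we have $g_{k,j}\in\cl(C\cup\{z\})$. The only way to reach $g_{k,j}$ is through $g_{k,j-1}$, so $g_{k,j-1}\in\cl(C\cup\{z\})$. Adding $g_{k,j}$ itself cannot derive $g_{k,j-1}$, by acyclicity. So I instead consider $C'=C\cup\{g_{k,j}\}$. It is closed, since the only rules with premise containing $g_{k,j}$ conclude $g_{k,j+1}$ or $x$, which are already in $C$ if $j<n$. The case $j=n$ needs a check that $x\in C$; this holds because $x$ is not an ancestor of $g_{k,n}$, so \cref{cor:nonancestor} applies. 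Hence $C'$ is closed, omits $g_{k,j-1}$, and is maximal by the argument above. Applying the induction characterization $\B_{k,j-1}$ to $C'$ gives $E=\{g_{k,j}\}\cup B$.
\item Next, if exactly one of $p_j,q_j$ is missing, say $p_j$, then $C\cup\{p_j\}$ is closed, because $g_{k,j-1}$ is missing and blocks the step rule. It still omits $g_{k,j}$, contradicting maximality.
\end{itemize}

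For the reverse inclusion, take $B\in\B_{k,j-1}$. The set $U\setminus(\{g_{k,j}\}\cup B)$ is closed. The step rules are blocked by $g_{k,j-1}\in B$, the downstream rules have present conclusions, and all other rules are inherited from the closedness of $U\setminus B$. For maximality, restoring $g_{k,j}$ is trivial. Restoring any $z\in B$ derives $g_{k,j-1}$ by maximality of $U\setminus B$, and since $p_j,q_j\notin B$ (they are not ancestors of $g_{k,j-1}$), a step rule then fires to give $g_{k,j}$.

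The main obstacle is Case 2. There I must check that the element sets $B$ never contain $p_j$ or $q_j$, and that removing $g_{k,j}$ never interacts with $g_{k,j+1}$, $x$, or other clauses. All of this reduces to ancestor bookkeeping via \cref{lem:locality} and \cref{cor:nonancestor}. I would make this precise first, showing by induction that every $B\in\B_{k,j-1}$ lies within $\Anc(g_{k,j-1})$.
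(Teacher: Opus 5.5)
\textbf{There is a genuine gap.} You treat $p_j$ and $q_j$ as non-ancestors of $g_{k,j-1}$. But $A_k\subseteq\Anc(g_{k,j-1})$, as your own ancestor list shows, and $A_k$ contains $p_j$ or $q_j$ whenever the $k$th clause mentions $z_j$. This breaks two steps.

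\textbf{First step (forward inclusion).} In Case 2 you dismiss the subcase where exactly one of $p_j,q_j$, say $p_j$, is missing, by claiming $C\cup\{p_j\}$ is closed. If $p_j\in A_k$, adding $p_j$ can complete $A_k$, fire $A_k\to g_{k,0}$, and push chain $k$ up to $g_{k,j}$ using $q_j\in C$. The subcase is not empty. Take $\varphi=(z_1)$, so $A_1=\{q_1\}$. Then $\{p_1,x\}=U\setminus\{q_1,g_{1,0},g_{1,1}\}$ is maximal among the closed sets omitting $g_{1,1}$, yet it omits exactly one of $p_1,q_1$. More generally, every $U\setminus E^{a}_{k,j}$ with $a\in A_k\cap\{p_j,q_j\}$ is such a set. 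So your forward inclusion wrongly rules these sets out. The repair is immediate: your $C'=C\cup\{g_{k,j}\}$ argument never uses $p_j,q_j\in C$, so apply it to every $C$ with $g_{k,j-1}\notin C$. This is the paper's case split, on whether $g_{k,j-1}\in C$, rather than on how many choices are missing.

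\textbf{Second step (reverse inclusion).} The claim $p_j,q_j\notin B$ is false. In the same example, $\{g_{1,0},q_1\}\in\B_{1,0}$ contains $q_1$. Your planned bookkeeping $B\subseteq\Anc(g_{k,j-1})$ cannot establish the claim, since $A_k$ lies inside that ancestor set. What is true, and sufficient, is that \emph{at most one} of $p_j,q_j$ lies in $B$:
\begin{itemize}
\item The normalization excludes tautological clauses, so $A_k$ contains at most one of the pair.
\item The other element $r\in\{p_j,q_j\}\setminus A_k$ is not an ancestor of $g_{k,j-1}$.
\item Hence $r\in U\setminus B$ by \cref{cor:nonancestor}, so $r\in C$, and $r$ fires a step rule once $g_{k,j-1}$ is rederived.
\end{itemize}
A smaller point: rederiving $g_{k,j-1}$ from $C\cup\{z\}$, rather than from $(U\setminus B)\cup\{z\}$, needs \cref{lem:locality} to drop the non-ancestor $g_{k,j}$. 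Maximality of $U\setminus B$ alone does not give it.

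This ``at most one'' fact, resting on non-tautological clauses, is the missing idea. It is exactly what the paper's choice of $r\in\{p_j,q_j\}\setminus A_k$ supplies.
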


\begin{proof}
Let $C$ be maximal closed subject to $g_{k,j}\notin C$. By \cref{cor:nonancestor}, every later gate on chain $k$, every gate on another chain, and $x$ belong to $C$.

First suppose $g_{k,j-1}\in C$. Closedness forces both $p_j$ and $q_j$ to be absent. No other element can be absent: after adding any further missing element, both choices would remain absent because no implication concludes a choice, so $g_{k,j}$ would remain underivable. Therefore
\[
  U\setminus C=\{g_{k,j},p_j,q_j\}.
\]
Conversely, the complement of this cut is closed and becomes $g_{k,j}$-containing after any omitted element is restored.

Now suppose $g_{k,j-1}\notin C$. At least one of $p_j,q_j$ belongs to $C$. Otherwise choose
\[
  r\in\{p_j,q_j\}\setminus A_k,
\]
which is possible because $A_k$ contains at most one member of the pair. The element $r$ is not an ancestor of $g_{k,j-1}$: it does not enter chain $k$ through the base rule, and its transition edge enters only the later gate $g_{k,j}$. By \cref{lem:locality}, adding $r$ still does not derive $g_{k,j-1}$, and hence does not derive $g_{k,j}$ either, contradicting maximality.

Put
\[
  D=C\cup\{g_{k,j}\}.
\]
The set $D$ is closed. Inserting $g_{k,j}$ can enable only later gates on the same chain, or $x$ when $j=n$, and all these elements already belong to $C$. The set $D$ still omits $g_{k,j-1}$. For every $z\notin D$, maximality of $C$ gives
\[
  g_{k,j}\in\cl(C\cup\{z\}).
\]
Since $z\neq g_{k,j}$, the first derivation of $g_{k,j}$ uses one of the two stage-$j$ rules and therefore derives $g_{k,j-1}$. Thus
\[
  g_{k,j-1}\in\cl(D\cup\{z\}).
\]
By \cref{lem:one-element}, $D$ is maximal $g_{k,j-1}$-free. Hence $U\setminus D\in\B_{k,j-1}$ and
\[
  U\setminus C=\{g_{k,j}\}\cup(U\setminus D).
\]
This proves the forward inclusion.

For the converse, let $B\in\B_{k,j-1}$ and put $D=U\setminus B$. Then $D$ is maximal closed subject to omitting $g_{k,j-1}$. Since $g_{k,j}$ is not an ancestor of $g_{k,j-1}$, \cref{cor:nonancestor} gives $g_{k,j}\in D$. At least one of $p_j,q_j$ lies in $D$: if both were absent, the same choice of $r\in\{p_j,q_j\}\setminus A_k$ and \cref{lem:locality} would contradict maximality of $D$.

Set
\[
  C=D\setminus\{g_{k,j}\}
   =U\setminus\bigl(B\cup\{g_{k,j}\}\bigr).
\]
The set $C$ is closed. Indeed, if a rule had its premise in $C$ and its conclusion outside $C$, then closedness of $D$ would force that conclusion to be $g_{k,j}$. But every rule concluding $g_{k,j}$ requires the absent predecessor $g_{k,j-1}$.

It remains to prove maximality. Restoring $g_{k,j}$ includes it directly. If $z\in B$, maximality of $D$ gives
\[
  g_{k,j-1}\in\cl(D\cup\{z\}).
\]
The later gate $g_{k,j}$ is not an ancestor of $g_{k,j-1}$, so locality allows it to be deleted from the initial set:
\[
  g_{k,j-1}\in\cl(C\cup\{z\}).
\]
At least one of $p_j,q_j$ is already in $C$, and hence $g_{k,j}$ is derived. By \cref{lem:one-element}, $C$ is maximal $g_{k,j}$-free. This proves the converse inclusion.
\end{proof}

\begin{proposition}[Closed form of the serial cuts]
\label{prop:gate-cuts}
For every $0\le j\le n$,
\[
  \B_{k,j}
  =\set{E_{k,j}^a:a\in A_k}
   \cup
   \set{E_{k,j}^t:1\le t\le j}.
\]
\end{proposition}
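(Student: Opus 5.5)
We prove the closed form by induction on $j$, using \cref{lem:base-gate} for the base case and \cref{lem:recurrence} for the inductive step; no further closure reasoning is needed.

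\emph{Base case.} For $j=0$ the second family $\set{E_{k,0}^t:1\le t\le 0}$ is empty. By definition $E_{k,0}^a=\{a,g_{k,0}\}$. Hence the claimed formula reads $\B_{k,0}=\set{\{g_{k,0},a\}:a\in A_k}$, which is exactly \cref{lem:base-gate}.

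\emph{Inductive step.} Assume the formula holds for $j-1$, where $1\le j\le n$. Substituting it into \cref{lem:recurrence} gives
\[
  \B_{k,j}
  =\set{\{g_{k,j},p_j,q_j\}}
   \cup\set{\{g_{k,j}\}\cup E_{k,j-1}^a:a\in A_k}
   \cup\set{\{g_{k,j}\}\cup E_{k,j-1}^t:1\le t\le j-1}.
\]
We then check three identities directly from the definitions:
\begin{itemize}
  \item $\{g_{k,j}\}\cup E_{k,j-1}^a=\{a,g_{k,0},\ldots,g_{k,j}\}=E_{k,j}^a$;
  \item for $1\le t\le j-1$, $\{g_{k,j}\}\cup E_{k,j-1}^t=\{p_t,q_t,g_{k,t},\ldots,g_{k,j}\}=E_{k,j}^t$;
  \item $\{g_{k,j},p_j,q_j\}=E_{k,j}^j$, since the gate run $g_{k,t},\ldots,g_{k,j}$ with $t=j$ is just $g_{k,j}$.
\end{itemize}
The first identity reproduces the family $\set{E_{k,j}^a:a\in A_k}$. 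The second and third together reproduce $\set{E_{k,j}^t:1\le t\le j}$. This completes the induction.

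\emph{Possible obstacle.} The only point needing care is bookkeeping with the index $t$. The new cut $\{g_{k,j},p_j,q_j\}$ must be recognised as the $t=j$ member of the family, and the ranges $1\le t\le j-1$ and $t=j$ must combine to $1\le t\le j$. Distinctness of the listed sets does not matter, because the statement is an equality of families, so any coincidences are harmless.
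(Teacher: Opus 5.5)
Your proof is correct and matches the paper's: induction on $j$, with \cref{lem:base-gate} as the base case and \cref{lem:recurrence} as the inductive step. The only difference is that you check the three set identities explicitly, including recognizing the new stage cut as $E_{k,j}^j$, where the paper just describes the branches of the recurrence informally.
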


\begin{proof}
Induct on $j$. The case $j=0$ is \cref{lem:base-gate}. Under the recurrence of \cref{lem:recurrence}, extending a base cut $\{g_{k,0},a\}$ through stages $1,\ldots,j$ yields
\[
  \{a,g_{k,0},\ldots,g_{k,j}\}=E_{k,j}^a.
\]
Starting with the new stage cut $\{p_t,q_t,g_{k,t}\}$ at stage $t$ and extending it through $t+1,\ldots,j$ yields
\[
  \{p_t,q_t,g_{k,t},\ldots,g_{k,j}\}=E_{k,j}^t.
\]
These are all branches of the recurrence.
\end{proof}

\subsection{The target sets}

\begin{lemma}
\label{lem:target-sets}
For every $1\le i\le n$, the set $M_i^x$ is closed and maximal among the closed sets omitting $x$.
\end{lemma}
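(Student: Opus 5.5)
We check closedness of $M_i^x$ rule by rule, and then verify maximality with \cref{lem:one-element}: restoring any single missing element must derive $x$. The missing elements are exactly those in the cut $E_i^x=\{x,p_i,q_i\}\cup\{g_{k,j}: i\le j\le n\}$.

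\textbf{Closedness.} A rule can fail only if its premise lies in $M_i^x$ and its conclusion lies in $E_i^x$.
\begin{itemize}
\item Terminal rules $\{g_{k,n}\}\to x$ have premise $g_{k,n}\in E_i^x$, since $n\ge i$, so they are blocked.
\item A step rule concludes $g_{k,j}$ with $j\ge i$. If $j>i$, its premise contains $g_{k,j-1}$ with $j-1\ge i$, which is missing. If $j=i$, its premise contains $p_i$ or $q_i$, both missing.
\item A base rule $A_k\to g_{k,0}$ matters only when $g_{k,0}$ is missing, i.e.\ when $i=0$. Since $i\ge1$, $g_{k,0}\in M_i^x$, so these rules impose no constraint.
\end{itemize}
Choice elements are never conclusions. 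Hence $M_i^x$ is closed.

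\textbf{Maximality.} The key point is that for $i\ge1$ every chain already contains $g_{k,0},\dots,g_{k,i-1}$; in particular $g_{k,i-1}\in M_i^x$ for every $k$ and $m\ge1$. We restore one missing element $z\in E_i^x$ and show $x\in\cl(M_i^x\cup\{z\})$.
\begin{itemize}
\item If $z=x$, there is nothing to prove.
\item If $z\in\{p_i,q_i\}$, pick any clause $k$. The stage-$i$ step rule fires from $g_{k,i-1}$ and $z$, giving $g_{k,i}$. For each later stage $j>i$, both $p_j,q_j$ lie in $M_i^x$, so the chain propagates to $g_{k,n}$. The terminal rule then gives $x$.
\item If $z=g_{k,j}$ with $j\ge i$, the same propagation from stage $j$ reaches $g_{k,n}$, because every later choice element is present. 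The terminal rule again gives $x$.
\end{itemize}
Combining closedness with \cref{lem:one-element} yields the claim.

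The only delicate step is confirming that the stage-$i$ rule is the sole blocker when $j=i$. This relies on $g_{k,i-1}$ being present, which holds because $i\ge1$ and the cut contains only gates of index at least $i$. Everything else is routine bookkeeping.
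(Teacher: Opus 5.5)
Your proposal is correct and follows essentially the same route as the paper. Closedness holds because the missing $p_i,q_i$ and the missing gates of stage $\ge i$ block every rule whose conclusion is absent; you carry out this check rule by rule, where the paper states it in one line. Maximality is verified via \cref{lem:one-element} with the same three cases (restoring $x$, restoring $p_i$ or $q_i$, restoring a gate $g_{k,j}$ with $j\ge i$), including the use of $m\ge1$.
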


\begin{proof}
The set $M_i^x$ contains every choice except $p_i,q_i$, contains every gate of stage below $i$, and contains no gate of stage at least $i$ and no $x$. The two missing choices block every chain at stage $i$, so no terminal gate and no $x$ is derived. Hence $M_i^x$ is closed.

Use \cref{lem:one-element}. Restoring $x$ includes it directly. Restoring $p_i$ or $q_i$ advances every already present $g_{k,i-1}$ to $g_{k,i}$; all later choices are present, so every chain reaches a terminal gate and derives $x$. The assumption $m\ge1$ guarantees that a chain exists. Finally, restoring a missing gate $g_{k,j}$ with $j\ge i$ starts chain $k$ at that stage, after which the later choices advance it to $g_{k,n}$ and then to $x$. Thus every proper closed extension contains $x$.
\end{proof}

By \cref{lem:max-omitter,lem:choice,prop:gate-cuts,lem:target-sets}, every member of $\M_\varphi$ is a genuine proper meet-irreducible of $\C_\varphi$, independently of the satisfiability of $\varphi$.

\section{The formula-dependent target class}

\begin{lemma}[Target classification]
\label{lem:target-classification}
The following are equivalent:
\begin{enumerate}[label=\textup{(\roman*)}]
  \item $\varphi$ is unsatisfiable;
  \item every closed set omitting $x$ is contained in some $M_i^x$;
  \item the maximal closed sets omitting $x$ are exactly $M_1^x,\ldots,M_n^x$.
\end{enumerate}
If $\varphi$ is satisfiable and $\alpha$ is a satisfying assignment, then $S_\alpha$ is closed, omits $x$, and is contained in no $M_i^x$.
\end{lemma}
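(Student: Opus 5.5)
I would prove the cycle of implications (i)$\Rightarrow$(ii)$\Rightarrow$(iii)$\Rightarrow$(i), and prove the final assertion about $S_\alpha$ directly, since it also gives the contrapositive of (ii)$\Rightarrow$(i).

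\emph{The satisfiable case.} Let $\alpha$ satisfy $\varphi$. I would show that $S_\alpha$ is closed by checking each family of rules. A base rule $A_k\to g_{k,0}$ has its premise inside $S_\alpha$ only when $\alpha$ falsifies $C_k$, by \eqref{eq:falsifies}; this never happens. Every step rule and every terminal rule has a gate in its premise, and $S_\alpha$ contains no gates. Hence $S_\alpha$ is closed, and it omits $x$. For each $i$, exactly one of $p_i,q_i$ lies in $S_\alpha$, whereas $M_i^x$ omits both. So $S_\alpha\not\subseteq M_i^x$ for every $i$. This proves the final sentence, and also shows that (ii) fails when $\varphi$ is satisfiable.

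\emph{(i)$\Rightarrow$(ii).} This is the main step. Let $C$ be closed with $x\notin C$, and assume first that $C$ contains at least one choice from every pair $\{p_i,q_i\}$. I would then derive a contradiction as follows.
\begin{enumerate}
  \item Choose $\alpha$ with $S_\alpha\subseteq C$ by picking, for each $i$, a choice present in $C$.
  \item Since $\varphi$ is unsatisfiable, $\alpha$ falsifies some clause $C_k$, so $A_k\subseteq S_\alpha\subseteq C$ by \eqref{eq:falsifies}.
  \item Closedness then puts $g_{k,0}$ in $C$. At each stage $j$, the present choice fires a step rule, so closedness propagates along the chain to $g_{k,n}$ and then to $x$.
\end{enumerate}
This contradicts $x\notin C$. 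Hence some index $i$ has $p_i,q_i\notin C$.

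Next I would show $C\subseteq M_i^x$, which amounts to showing that $C$ contains no gate $g_{k,j}$ with $j\ge i$. Suppose some such gate lies in $C$, and take the least $j\ge i$ with $g_{k,j}\in C$.
\begin{itemize}
  \item If $j>i$, the rules concluding $g_{k,j}$ are not the obstruction; the point is that $g_{k,j}$ forces its successors. Closedness of $C$ gives a pure upward implication: once $g_{k,j}\in C$, the step rules need choices at stages $j+1,\ldots,n$ to push the chain onward.
  \item The choices at later stages may be absent, so this gives no contradiction directly.
\end{itemize}
Because the chain may stall, I would instead fix $i$ as the \emph{largest} index with both $p_i,q_i\notin C$. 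Every stage after $i$ then has a present choice. Any gate $g_{k,j}\in C$ with $j\ge i$ therefore propagates through stages $j+1,\ldots,n$ to $g_{k,n}$, and then to $x$, a contradiction. So $C$ contains no gate of stage at least $i$, and $C\subseteq M_i^x$. I expect this choice of the largest blocked stage to be the only delicate point.

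\emph{(ii)$\Rightarrow$(iii)$\Rightarrow$(i).} For (ii)$\Rightarrow$(iii), each $M_i^x$ is maximal $x$-free by \cref{lem:target-sets}. Any maximal $x$-free closed set lies in some $M_i^x$ by (ii), and therefore equals it by maximality. For (iii)$\Rightarrow$(i), assume $\varphi$ is satisfiable. Extend the closed $x$-free set $S_\alpha$ to a maximal closed $x$-free set. This set contains $S_\alpha$, so it cannot be any $M_i^x$, since no $M_i^x$ contains $S_\alpha$. Hence (iii) fails.
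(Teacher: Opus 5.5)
Your proof is correct. It differs from the paper's only in how you organize (i)$\Rightarrow$(ii).

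The paper argues by contradiction. It assumes $C\nsubseteq M_i^x$ for every $i$ and takes a gate $g_{k,r}\in C$ of maximum stage. It then uses non-containment in $M_i^x$ for $i>r$ to find a present choice at each later stage, which rules out all gates. Only after that does it read off a complete assignment set $S_\alpha\subseteq C$ and invoke unsatisfiability.

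You run the two steps in the opposite order and use the dual extremal choice. The assignment argument comes first; it needs no prior gate elimination, because closedness alone propagates $g_{k,0}$ along the chain to $x$. That yields a stage at which both choices are absent. You then take the \emph{largest} such stage $i$, so every later stage has a present choice, and any gate of stage at least $i$ would reach $x$.

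Your version is direct and names the witness: $C\subseteq M_i^x$ for $i$ the largest index with $p_i,q_i\notin C$. The paper's version instead first reduces to the gate-free case, where non-containment in every $M_i^x$ is the same as having a complete choice set. Your closing of the cycle via (iii)$\Rightarrow$(i), through a maximal $x$-free extension of $S_\alpha$, is likewise fine in place of the paper's direct (iii)$\Rightarrow$(ii).

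In a final write-up, delete the abandoned attempt that used the least $j\ge i$ for an arbitrary blocked index $i$. Only the largest-blocked-stage argument is needed.
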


\begin{proof}
The equivalence of (ii) and (iii) follows from \cref{lem:target-sets} and finiteness: every closed $x$-free set extends to a maximal one.

Assume first that $\varphi$ is unsatisfiable. Let $C$ be closed, let $x\notin C$, and suppose for contradiction that $C\nsubseteq M_i^x$ for every $i$.

Suppose $C$ contains a gate. Choose a gate $g_{k,r}\in C$ whose stage $r$ is maximum among all gates in $C$. Since $C$ omits $x$, we have $r<n$. For every $i>r$, noncontainment in $M_i^x$ gives an element of
\[
  C\cap(E_i^x\setminus\{x\})
  \subseteq
  C\cap\left(\{p_i,q_i\}\cup
  \set{g_{\ell,s}:s\ge i}\right).
\]
By maximality of $r$, no gate of stage at least $i$ belongs to $C$, so $p_i\in C$ or $q_i\in C$. Starting from $g_{k,r}$, these choices advance chain $k$ through stages $r+1,\ldots,n$ and derive $x$, a contradiction. Therefore $C$ contains no gate.

Now noncontainment in $M_i^x$ implies that $p_i\in C$ or $q_i\in C$ for every $i$. Choose exactly one present element from each pair, obtaining a complete assignment set $S_\alpha\subseteq C$. Since $\varphi$ is unsatisfiable, $\alpha$ falsifies some clause $C_k$. By \eqref{eq:falsifies}, $A_k\subseteq S_\alpha\subseteq C$, so the base rule derives $g_{k,0}$. A choice is present at every stage, so the chain reaches $g_{k,n}$ and derives $x$, again a contradiction. Thus (ii) holds.

Conversely, suppose $\varphi$ is satisfiable and let $\alpha$ satisfy it. No $A_k$ is contained in $S_\alpha$, so no initial gate is derived. Since $S_\alpha$ contains no gate, no transition or terminal rule fires. Hence $S_\alpha$ is closed and omits $x$. For each $i$, the set $S_\alpha$ contains exactly one of $p_i,q_i$, while $M_i^x$ contains neither, so $S_\alpha\nsubseteq M_i^x$. This disproves (ii).
\end{proof}

\begin{proposition}[Complete meet-irreducible classification]
\label{prop:classification}
Every member of $\M_\varphi$ belongs to $\MI(\C_\varphi)$. Moreover,
\[
  \M_\varphi=\MI(\C_\varphi)
  \quad\Longleftrightarrow\quad
  \varphi\text{ is unsatisfiable}.
\]
If $\varphi$ is satisfiable, at least one unlisted meet-irreducible is maximal among the closed sets omitting $x$.
\end{proposition}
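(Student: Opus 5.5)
The plan is to assemble the earlier classifications. By \cref{lem:max-omitter}, every proper meet-irreducible of $\C_\varphi$ is maximal among the closed sets omitting some element $y$. I will split by the type of $y$: a choice element $p_i$ or $q_i$, an auxiliary gate $g_{k,j}$, or the target $x$. The first sentence of the statement is the remark already recorded after \cref{lem:target-sets}. The members $P_i,Q_i$ are meet-irreducible by \cref{lem:choice}, the gate complements by \cref{prop:gate-cuts}, and the $M_i^x$ by \cref{lem:target-sets}, each combined with \cref{lem:max-omitter}.

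For the equivalence, I will first show the following. Whatever $\varphi$ is, every meet-irreducible that omits a choice element or a gate already lies in $\M_\varphi$. If $M\in\MI(\C_\varphi)$ is maximal omitting $p_i$ (or $q_i$), then \cref{lem:choice} gives $M=P_i$ (or $Q_i$). If $M$ is maximal omitting $g_{k,j}$, then $U\setminus M\in\B_{k,j}$. By \cref{prop:gate-cuts}, $U\setminus M$ is some $E_{k,j}^a$ or $E_{k,j}^t$, and the complements of these are exactly the gate members of $\M_\varphi$. So the only meet-irreducibles that can fail to be listed are those maximal among the closed sets omitting $x$.

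Now suppose $\varphi$ is unsatisfiable. By \cref{lem:target-classification}(iii), the maximal closed $x$-free sets are exactly $M_1^x,\dots,M_n^x$, which are all in $\M_\varphi$. So $\MI(\C_\varphi)\subseteq\M_\varphi$, and with the first part we get equality.

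Suppose instead that $\varphi$ is satisfiable, with satisfying assignment $\alpha$. By the last clause of \cref{lem:target-classification}, $S_\alpha$ is closed, omits $x$, and lies in no $M_i^x$. Extend $S_\alpha$ to a maximal closed set $N$ omitting $x$. This is possible by finiteness, since $U$ contains $x$ and the family of closed $x$-free sets is nonempty. By \cref{lem:max-omitter}, $N\in\MI(\C_\varphi)$.

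It remains to check that $N\notin\M_\varphi$, and this is the one step needing care. First, $N\neq M_i^x$ for every $i$, because $S_\alpha\subseteq N$ but $S_\alpha\nsubseteq M_i^x$. Second, every other member of $\M_\varphi$ contains $x$. The choice complements $P_i,Q_i$ obviously contain $x$. The gate cuts $E_{k,j}^a$ and $E_{k,j}^t$ consist of choices and gates only, so their complements contain $x$ as well. Since $x\notin N$, $N$ is none of these. Hence $N$ is an unlisted meet-irreducible that is maximal among closed sets omitting $x$, which gives both the backward direction of the equivalence and the final sentence of the statement.
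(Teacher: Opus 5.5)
Your proof is correct and follows the paper's route. You classify meet-irreducibles by the omitted element via \cref{lem:max-omitter}, \cref{lem:choice}, \cref{prop:gate-cuts}, and \cref{lem:target-classification}, and in the satisfiable case you extend $S_\alpha$ to a maximal closed $x$-free set $N$ that is unlisted because it contains $S_\alpha$ and omits $x$; you merely spell out a bit more explicitly than the paper why only $x$-omitters can be missing from $\M_\varphi$.
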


\begin{proof}
Every ground-set element is a choice, a gate, or $x$. By \cref{lem:max-omitter}, every meet-irreducible is maximal among the closed sets omitting some ground-set element. The maximal omitters of choices are classified by \cref{lem:choice}; those of gates by \cref{prop:gate-cuts}; and those of $x$ by \cref{lem:target-classification}. This proves the equivalence.

If $\varphi$ is satisfiable, extend the closed assignment set $S_\alpha$ from \cref{lem:target-classification} to a maximal closed $x$-free set $N$. Then $N$ is meet-irreducible by \cref{lem:max-omitter}. It is not one of the $M_i^x$, because it contains $S_\alpha$ and hence one of $p_i,q_i$ for every $i$. It is not a non-target supplied set either, because every non-target member of $\M_\varphi$ contains $x$. Thus $N\notin\M_\varphi$.
\end{proof}

A useful promise now follows without a separate irredundancy argument.

\begin{corollary}[Self-exactness of the supplied family]
\label{cor:self-exact}
For every formula $\varphi$,
\[
  \M_\varphi=\MI(\Int(\M_\varphi)).
\]
\end{corollary}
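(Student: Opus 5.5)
This is a direct application of \cref{lem:subfamily}, taking the ambient closure system to be $\C_\varphi$ and the subfamily to be $\mathcal A=\M_\varphi$.

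The only hypothesis to check is $\M_\varphi\subseteq\MI(\C_\varphi)$. It holds for every formula: the first assertion of \cref{prop:classification} gives it, as does the remark after \cref{lem:target-sets}. Concretely, each $P_i,Q_i$ is a maximal omitter of a choice element by \cref{lem:choice}. Each gate complement $U\setminus E^a_{k,j}$ or $U\setminus E^t_{k,j}$ is a maximal omitter of $g_{k,j}$ by \cref{prop:gate-cuts}. Each $M_i^x$ is a maximal omitter of $x$ by \cref{lem:target-sets}. All of these are proper closed sets, so \cref{lem:max-omitter} makes them meet-irreducible in $\C_\varphi$. None of this uses unsatisfiability, which is why the corollary holds for every $\varphi$.

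With the hypothesis in place, \cref{lem:subfamily} yields $\MI(\Int(\M_\varphi))=\M_\varphi$ immediately. For completeness, here is what that lemma does in this setting. Each member of $\M_\varphi$ remains meet-irreducible in the smaller system $\Int(\M_\varphi)\subseteq\C_\varphi$. Conversely, any meet-irreducible of $\Int(\M_\varphi)$ is written as a minimal intersection of members of $\M_\varphi$, and minimality forces that intersection to have a single term.

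There is essentially no obstacle; the substantive work was done in the classification results above. The one point of care is that $\MI$ consists only of proper closed sets, so the empty intersection $U$ is not counted. Every member of $\M_\varphi$ omits some element, so none of them equals $U$.
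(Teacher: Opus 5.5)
Your proposal is correct and is essentially the paper's proof: the paper likewise gets $\M_\varphi\subseteq\MI(\C_\varphi)$ from \cref{prop:classification}, for every $\varphi$, and then applies \cref{lem:subfamily}. Your unpacking of the hypothesis check is accurate, and so are your remarks on the lemma's mechanism and on properness, but they add nothing the proof needs beyond those two citations.
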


\begin{proof}
By \cref{prop:classification}, $\M_\varphi\subseteq\MI(\C_\varphi)$. Apply \cref{lem:subfamily}.
\end{proof}

\section{Correctness and complexity}

\begin{theorem}[Correctness of the reduction]
\label{thm:correctness}
For every normalized $3$-CNF formula $\varphi$, the following are equivalent:
\begin{enumerate}[label=\textup{(\roman*)}]
  \item $\varphi$ is unsatisfiable;
  \item $\M_\varphi=\MI(\C_\varphi)$;
  \item $\C_\varphi=\Int(\M_\varphi)$.
\end{enumerate}
\end{theorem}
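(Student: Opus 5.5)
The equivalence (i)$\Leftrightarrow$(ii) is exactly \cref{prop:classification}, so the plan is to show (ii)$\Leftrightarrow$(iii) using the general facts from the preliminaries together with the promise $\M_\varphi\subseteq\MI(\C_\varphi)$, which holds for every formula.

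For (ii)$\Rightarrow$(iii), assume $\M_\varphi=\MI(\C_\varphi)$. \Cref{lem:mi-generation} gives $\C_\varphi=\Int(\MI(\C_\varphi))=\Int(\M_\varphi)$, and we are done.

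For (iii)$\Rightarrow$(ii), assume $\C_\varphi=\Int(\M_\varphi)$. Taking meet-irreducibles of both sides and applying \cref{cor:self-exact} (equivalently \cref{lem:subfamily} with $\mathcal A=\M_\varphi\subseteq\MI(\C_\varphi)$) yields
\[
  \MI(\C_\varphi)=\MI(\Int(\M_\varphi))=\M_\varphi .
\]
An alternative argument for the contrapositive goes through (i): if $\varphi$ is satisfiable, \cref{prop:classification} produces an unlisted meet-irreducible $N\in\C_\varphi$. If $N$ belonged to $\Int(\M_\varphi)$, it would be an intersection of listed sets strictly different from $N$, contradicting meet-irreducibility in $\C_\varphi$, since every member of $\M_\varphi$ is closed. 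Hence $N\in\C_\varphi\setminus\Int(\M_\varphi)$, so (iii) fails.

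There is no real obstacle left. All the substantive work, namely the gate recurrence and the target classification, has already been done in \cref{prop:gate-cuts,lem:target-classification}. The only point to watch is the direction (iii)$\Rightarrow$(ii): equality of generated families must be converted into equality of the irreducible lists. This is exactly what \cref{lem:subfamily} provides, and it relies on the promise that every supplied set is a genuine meet-irreducible of $\C_\varphi$.
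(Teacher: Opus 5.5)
Your proof is correct. For (i)$\Leftrightarrow$(ii) and (ii)$\Rightarrow$(iii) it matches the paper's proof, but you close the cycle by a different route.

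The paper never proves (iii)$\Rightarrow$(ii). Instead it shows $\neg$(i)$\Rightarrow\neg$(iii) directly, using the concrete witness $S_\alpha$. By \cref{lem:target-classification}, $S_\alpha$ is closed, omits $x$, and lies in no $M_i^x$. Every non-target member of $\M_\varphi$ contains $x$, so any intersection of supplied sets containing $S_\alpha$ also contains $x$.

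You instead obtain (iii)$\Rightarrow$(ii) from \cref{cor:self-exact}, that is, from the general fact $\MI(\Int(\mathcal A))=\mathcal A$ for $\mathcal A\subseteq\MI(\C)$. Your alternative argument via the unlisted $N$ is a hands-on version of the same irreducibility argument. It is also fine once you note two points: $N\neq U$, so the intersected family is nonempty; and binary meet-irreducibility extends to finite intersections by induction.

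What your route buys is that (ii)$\Leftrightarrow$(iii) becomes a formula-independent consequence of the promise $\M_\varphi\subseteq\MI(\C_\varphi)$. On promised instances, the identification question and the equality question are then literally the same question.

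What the paper's route buys is that it avoids \cref{lem:subfamily} and any maximal extension of $S_\alpha$. It produces an explicit no-certificate $S_\alpha\in\C_\varphi\setminus\Int(\M_\varphi)$, read off from a satisfying assignment. This is exactly the kind of witness used in \cref{prop:conp-membership}.
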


\begin{proof}
The equivalence of (i) and (ii) is \cref{prop:classification}. If (ii) holds, generation by meet-irreducibles gives
\[
  \C_\varphi
  =\Int(\MI(\C_\varphi))
  =\Int(\M_\varphi),
\]
so (iii) holds.

It remains to show directly that satisfiability rules out (iii). Let $\alpha$ satisfy $\varphi$. By \cref{lem:target-classification}, $S_\alpha$ is a closed set omitting $x$ and is contained in no target set $M_i^x$. Every non-target member of $\M_\varphi$ contains $x$. If $S_\alpha$ were an intersection of supplied sets, every factor would contain $S_\alpha$, so no target factor could occur. Every factor would therefore contain $x$, and so would their intersection. The empty intersection is $U$ and also contains $x$. This contradicts $x\notin S_\alpha$. Hence $S_\alpha\notin\Int(\M_\varphi)$ and (iii) fails.
\end{proof}

\begin{lemma}[Polynomial size]
\label{lem:size}
The instance $(U,\Sigma_\varphi,\M_\varphi)$ is computable in polynomial time and has polynomial encoding length.
\end{lemma}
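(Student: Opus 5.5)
The plan is to count the objects of the construction, bound each by a polynomial in $n+m$ (and hence in the length $|\varphi|$ of the input formula), and note that each is produced by a simple explicit loop. The normalization step before the construction only deletes clauses or literals, adds at most one unit clause, or outputs a fixed formula. It is therefore a linear-time preprocessing step, and $n,m\le|\varphi|+1$.

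\emph{Ground set.} I would count directly:
\[
|U|=2n+m(n+1)+1=O(nm).
\]

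\emph{Base $\Sigma_\varphi$.} There are $m$ base rules \eqref{eq:base}, $2mn$ step rules \eqref{eq:step}, and $m$ terminal rules \eqref{eq:terminal}. That gives $2m+2mn=O(nm)$ implications, each with premise size at most three. Each rule is written down by reading off $A_k$ from $C_k$ via the map $f$, which takes constant time per literal.

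\emph{Supplied family $\M_\varphi$.} The members are as follows.
\begin{itemize}
\item There are $2n$ choice complements.
\item There are at most $3m(n+1)$ sets of the form $U\setminus E^a_{k,j}$, since $|A_k|\le3$.
\item There are $m\cdot n(n+1)/2$ sets of the form $U\setminus E^t_{k,j}$.
\item There are $n$ target sets $M^x_i$.
\end{itemize}
In total $|\M_\varphi|=O(mn^2)$. Each member is listed explicitly as a subset of $U$ with $O(nm)$ elements, and its cut is determined directly by the indices. The encoding length is therefore $O(m^2n^3)$ up to logarithmic factors for element names.

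Every object is generated by nested loops over $i,j,k,t,a$ with polynomial bounds, so the whole instance is computed in polynomial time. There is no real obstacle. The only point needing care is that $\M_\varphi$ is given extensionally, so the bound must multiply the number of sets by $|U|$. This is the largest term, and it is still polynomial.
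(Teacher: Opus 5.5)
Your proposal is correct and follows the paper's proof essentially verbatim. Both arguments count $|U|=2n+m(n+1)+1$, $|\Sigma_\varphi|=2m(n+1)$ and $|\M_\varphi|=3n+\sum_k\bigl((n+1)|A_k|+n(n+1)/2\bigr)=O(mn^2)$, then multiply by the bit-vector length $|U|=O(mn)$ to bound the encoding; your remark that normalization keeps $n,m$ linear in the input length is a harmless detail the paper leaves implicit.
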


\begin{proof}
The ground set and base have sizes
\[
  |U|=2n+m(n+1)+1,
  \qquad
  |\Sigma_\varphi|=m+2mn+m=2m(n+1).
\]
The number of supplied sets is
\begin{align*}
  |\M_\varphi|
  &=3n+\sum_{k=1}^{m}\sum_{j=0}^{n}(|A_k|+j)\\
  &=3n+\sum_{k=1}^{m}
    \left((n+1)|A_k|+\frac{n(n+1)}2\right)\\
  &=O(mn^2).
\end{align*}
Each supplied set is encoded by a bit vector of length $|U|=O(mn)$, so the total output encoding is polynomial in the input size.
\end{proof}

\begin{theorem}[Main theorem]
\label{thm:main}
\problemname{Imp--MI Equality} is coNP-complete. CoNP-hardness holds even on instances satisfying all of the following simultaneously:
\begin{enumerate}[label=\textup{(\alph*)}]
  \item every implication has a singleton conclusion and a premise of size at most three;
  \item the dependency digraph is acyclic;
  \item every supplied set belongs to $\MI(\C_\Sigma)$;
  \item the supplied family satisfies
  \[
    \M=\MI(\Int(\M)).
  \]
\end{enumerate}
Equivalently, every supplied set passes the natural validity tests; coNP-hardness remains in the global question of whether the canonical list is complete.
\end{theorem}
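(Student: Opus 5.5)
The plan is to assemble the theorem from results already established, since every ingredient is in place. Membership in coNP is \cref{prop:conp-membership}. For hardness, I will use the map $\varphi\mapsto(U,\Sigma_\varphi,\M_\varphi)$, composed with the explicit normalization of $3$-\textsc{UNSAT} instances described at the start of the reduction section. That normalization is polynomial and preserves satisfiability, so it suffices to handle normalized formulas. By \cref{lem:size}, the composite map runs in polynomial time. By \cref{thm:correctness}, $\varphi$ is unsatisfiable exactly when $\C_{\Sigma_\varphi}=\Int(\M_\varphi)$. This is a many-one reduction from a coNP-complete problem, so \problemname{Imp--MI Equality} is coNP-hard and hence coNP-complete.

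Next I check that every produced instance satisfies the promises (a)--(d), uniformly in $\varphi$, so that hardness holds on the restricted class.
\begin{itemize}
\item \textbf{(a)} The rules \eqref{eq:base}--\eqref{eq:terminal} have singleton conclusions. Their premises are $A_k$ with $|A_k|\le3$, two-element step premises, and singletons.
\item \textbf{(b)} The rank function $\rho$ strictly increases along every arc of $G_{\Sigma_\varphi}$, so the digraph is acyclic.
\item \textbf{(c)} This is the first sentence of \cref{prop:classification}. It rests on \cref{lem:choice,prop:gate-cuts,lem:target-sets} together with \cref{lem:max-omitter}. Since none of these uses satisfiability, (c) holds for yes- and no-instances alike.
\item \textbf{(d)} This is \cref{cor:self-exact}.
\end{itemize}
The normalization step guarantees $m,n\ge1$ and clauses without repeated or complementary literals. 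These hypotheses are used in \cref{lem:target-sets} and \cref{lem:recurrence}, so they do hold on the image of the reduction.

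For the final ``equivalently'' sentence, I note that the natural per-set validity tests succeed on every supplied set. These tests are closedness, maximality among the closed sets omitting some $y$ (checkable via \cref{lem:one-element}), and non-redundancy in $\M$. Closedness and maximality follow from (c); non-redundancy follows from (d), because a redundant member would be an intersection of others and thus not meet-irreducible in $\Int(\M)$. Hence every no-instance is witnessed only by a missing meet-irreducible, namely the one produced in \cref{prop:classification}.

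I expect no real obstacle, since all the technical work sits in \cref{lem:recurrence} and \cref{lem:target-classification}. The one point needing care is to confirm that (c) and (d) are proved independently of satisfiability, so that the promise holds on both sides of the reduction. I will verify this by tracing the dependencies of \cref{prop:classification}: its first claim uses only the classification lemmas for choices, gates and the fixed targets $M_i^x$, and never \cref{lem:target-classification}.
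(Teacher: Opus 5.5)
Your proposal is correct and follows the paper's proof: coNP membership from \cref{prop:conp-membership}, hardness from the polynomial construction (\cref{lem:size}) together with \cref{thm:correctness}, and promises (a)--(d) from the construction, \cref{prop:classification}, and \cref{cor:self-exact}. Your extra checks are also sound: (c) does not depend on satisfiability, the normalization supplies the hypotheses used in \cref{lem:recurrence,lem:target-sets}, and (d) implies non-redundancy.
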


\begin{proof}
Membership in coNP is \cref{prop:conp-membership}. The construction is polynomial by \cref{lem:size}, and \cref{thm:correctness} gives
\[
  \varphi\in 3\text{-}\problemname{UNSAT}
  \quad\Longleftrightarrow\quad
  \C_{\Sigma_\varphi}=\Int(\M_\varphi).
\]
The structural promises follow from the construction, \cref{prop:classification}, and \cref{cor:self-exact}.
\end{proof}

\section{Consequences across representations}
\label{sec:consequences}

\subsection{Identification and characteristic models}

The main theorem immediately yields the corresponding exact identification problem.

\begin{corollary}[Meet-irreducible identification]
\label{cor:mi-identification-final}
Given an implicational base $\Sigma$ and a family $\M$, deciding whether
\[
  \M=\MI(\C_\Sigma)
\]
is coNP-complete. Hardness holds under the promises of \cref{thm:main}, including $\M\subseteq\MI(\C_\Sigma)$.
\end{corollary}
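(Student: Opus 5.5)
Membership and hardness are handled separately, and both follow from results already in place.

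For membership in coNP, I take as a no-certificate a set $C\subseteq U$ that lies in exactly one of $\C_\Sigma$ and $\Int(\M)$, exactly as in \cref{prop:conp-membership}. This works because $\M=\MI(\C_\Sigma)$ implies $\C_\Sigma=\Int(\M)$ by \cref{lem:mi-generation}. One might worry that the converse fails, since $\Int(\M)=\C_\Sigma$ need not force $\M$ to equal the irreducibles. To avoid relying on that, I use a direct characterization: $\M=\MI(\C_\Sigma)$ holds iff
\begin{itemize}
\item every $M\in\M$ is a proper meet-irreducible of $\C_\Sigma$, and
\item every proper meet-irreducible of $\C_\Sigma$ lies in $\M$.
\end{itemize}
The first condition can be checked in polynomial time. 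Closedness is checked by scanning $\Sigma$. Irreducibility is checked via \cref{lem:max-omitter,lem:one-element}: for each $y\notin M$, test whether $y\in\cl_\Sigma(M\cup\{z\})$ for all $z\notin M$. If the first condition fails, the certificate is the offending $M$, verified in polynomial time. If the second fails, the certificate is an unlisted set $N$ together with a witness $y$, and \cref{lem:one-element} confirms in polynomial time that $N$ is closed and maximal among closed sets omitting $y$. Either way the complement problem is in NP.

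For hardness, I reuse the reduction $\varphi\mapsto(U,\Sigma_\varphi,\M_\varphi)$ of \cref{lem:size} unchanged. The equivalence of (i) and (ii) in \cref{thm:correctness}, which is \cref{prop:classification}, states that $\varphi$ is unsatisfiable iff $\M_\varphi=\MI(\C_{\Sigma_\varphi})$. This is exactly a polynomial reduction from $3$-\textsc{UNSAT} to the identification problem. The promises of \cref{thm:main} carry over verbatim because the instances are the same, including $\M_\varphi\subseteq\MI(\C_\varphi)$ from \cref{prop:classification}.

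The only step needing care is the membership argument, specifically choosing certificates for equality to $\MI$ rather than reducing to \problemname{Imp--MI Equality}. The polynomial-time test of meet-irreducibility via \cref{lem:one-element} settles it, so I expect no real obstacle.
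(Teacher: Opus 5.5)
Your proposal is correct and matches the paper's argument. Hardness is read off from the (i)$\Leftrightarrow$(ii) equivalence of \cref{thm:correctness} on the same instances, and coNP membership uses the same two kinds of no-certificate: a listed set that is not a proper meet-irreducible, or an unlisted maximal $y$-omitter checked via \cref{lem:one-element}. The only difference is minor: where the paper certifies reducibility of a listed set by a guessed pair of strict closed supersets, you test irreducibility deterministically via \cref{lem:max-omitter,lem:one-element}, which is equally valid (and your opening detour through \cref{prop:conp-membership} becomes unnecessary once you adopt the direct characterization).
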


For completeness, membership in coNP has a short certificate: either a listed set is not a proper meet-irreducible, witnessed by nonclosedness or by two strict closed supersets whose intersection is that set, or an unlisted meet-irreducible is supplied together with an attached element and the one-element maximality test of \cref{lem:one-element}.

Under the subset representation of assignments, the characteristic models of a definite Horn theory are precisely the all-ones model together with its proper meet-irreducible models. Khardon's CMI problem takes a Horn theory $H$ and a family $\G$ of satisfying assignments and asks whether every characteristic model of $H$ belongs to $\G$~\cite{Khardon1995}. Adding the all-ones model to our supplied family gives the next result.

\begin{corollary}[Characteristic Models Identification]
\label{cor:cmi-final}
Characteristic Models Identification for definite Horn theories is coNP-complete. Hardness holds for acyclic theories with bodies of size at most three, even under the promise that every supplied model is genuinely characteristic; the hard question is whether the supplied list is complete.
\end{corollary}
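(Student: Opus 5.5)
Membership in coNP and hardness are proved separately, with the instance $(\Sigma_\varphi,\M_\varphi)$ of \cref{thm:main} reused after translation into Horn terminology. Under the subset representation of assignments, a definite Horn theory $H$ with clauses $A\to b$ is exactly an implicational base $\Sigma$. Its models are exactly the members of $\C_\Sigma$, since definite clauses are satisfied by the all-ones assignment and their model set is intersection-closed. I take as given the statement preceding the corollary: the characteristic models are $\MI(\C_\Sigma)\cup\{U\}$.

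\emph{Membership.} A no-instance of CMI means some characteristic model is missing from $\G$. The certificate is a set $N\subseteq U$ together with an element $y\notin N$, unless $N=U$, in which case $U\notin\G$ is checked directly. The verifier checks three things, all in polynomial time:
\begin{enumerate}
\item $N$ is closed, by scanning the clauses;
\item $N$ is maximal among the closed sets omitting $y$, by the one-element test of \cref{lem:one-element}, which needs $|U|$ closure computations;
\item $N\notin\G$.
\end{enumerate}
By \cref{lem:max-omitter}, conditions 1 and 2 certify that $N$ is a proper meet-irreducible, hence characteristic. Conversely, every missing characteristic model has such a certificate, again by \cref{lem:max-omitter}. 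So the complement of CMI is in NP.

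\emph{Hardness.} Given a normalized formula $\varphi$, output $H=\Sigma_\varphi$ and $\G_\varphi=\M_\varphi\cup\{U\}$. The map is polynomial by \cref{lem:size}, and $H$ is acyclic with bodies of size at most three by construction. Each supplied model is genuinely characteristic: every member of $\M_\varphi$ lies in $\MI(\C_\varphi)$ by \cref{prop:classification}, and $U$ is the all-ones characteristic model. By the same proposition, every characteristic model lies in $\G_\varphi$ exactly when $\MI(\C_\varphi)\subseteq\M_\varphi$, that is, exactly when $\M_\varphi=\MI(\C_\varphi)$, which holds iff $\varphi$ is unsatisfiable. Hence $\varphi\in 3$-\problemname{UNSAT} iff $(H,\G_\varphi)$ is a yes-instance of CMI, and all the stated promises hold.

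The main obstacle is purely definitional: making sure Khardon's notion of characteristic model coincides with $\MI(\C_\Sigma)\cup\{U\}$ under the chosen encoding, including the edge case of the all-ones model. I will rely on the identification stated just before the corollary. If a formulation of CMI also demanded $\G\subseteq$ models, or allowed non-characteristic extras, this is harmless here, because the supplied family is promised to consist exactly of characteristic models.
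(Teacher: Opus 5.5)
Your proposal is correct and follows the paper's route: hardness comes from appending the all-ones model $U$ to $\M_\varphi$ and applying \cref{prop:classification}, and membership comes from an unlisted set $N$ with an attached element $y$, verified by \cref{lem:one-element}. One small addition is needed if CMI is read, as in Khardon's paper, as exact equality of $\G$ with the set of characteristic models: your coNP certificate must then also cover a listed set that is not characteristic (non-closed, or the intersection of two strict closed supersets, as in the certificate after \cref{cor:mi-identification-final}), since your closing remark only settles that variant on the hardness side, where the promise applies.
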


Khardon proved that Characteristic Models Identification, computing all characteristic models, and reconstructing a Horn representation from characteristic models are equivalent under output-polynomial reductions~\cite{Khardon1995}. Therefore \cref{cor:cmi-final} supplies the missing lower bound in that equivalence.

\begin{corollary}[Horn translation barrier]
\label{cor:horn-output}
Unless $\mathrm P=\mathrm{NP}$, there is no output-polynomial algorithm for either of the following general translation tasks:
\begin{enumerate}[label=\textup{(\roman*)}]
  \item computing all characteristic models of a Horn theory;
  \item constructing a Horn representation from its characteristic models.
\end{enumerate}
\end{corollary}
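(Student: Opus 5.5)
The argument is by contradiction: we combine \cref{cor:cmi-final} with Khardon's output-polynomial equivalences~\cite{Khardon1995}. For task (i), suppose an algorithm $\mathcal A$ computes all characteristic models of a definite Horn theory $H$ in time bounded by a polynomial $p(|H|+|\mathrm{out}|)$. We use it to decide CMI in polynomial time on the hard instances of \cref{cor:cmi-final}. These are the instances produced by the reduction of \cref{thm:main}, with the all-ones model added to $\M_\varphi$, and the promise holds that every supplied model is characteristic. Given $(H,\G)$ with $H=\Sigma_\varphi$, we run $\mathcal A$ on $H$ for at most $p(|H|+\|\G\|+c)$ steps, where $c$ bounds the encoding of one extra model (a bit vector of length $|U|$).

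The case analysis then decides the instance. If the run halts within this budget, we compare the output with $\G$ and accept exactly when they coincide. If the run does not halt, or if it emits a model not in $\G$, we reject. This is correct for the following reason. When $\G$ is complete, the output equals $\G$ exactly, because the promise says every listed model is characteristic. The output size is therefore $\|\G\|$, and $\mathcal A$ halts within budget. When $\G$ is incomplete, either $\mathcal A$ outputs some unlisted characteristic model, or it fails to finish within the budget; a too-long run would be the only way to avoid exhibiting the extra model. Either event triggers rejection. The procedure is polynomial, so CMI would lie in P, and since CMI is coNP-complete this gives $\mathrm P=\mathrm{NP}$.

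For task (ii) we do not give a separate construction. Khardon showed that CMI, CCM and SID are mutually output-polynomially reducible, so an output-polynomial algorithm for SID would yield an output-polynomial algorithm for CCM, and hence polynomial CMI by the argument above. If one wants a direct argument, the plan is to feed the candidate family $\G$ to the SID algorithm. We let it run within the polynomial budget appropriate to a Horn output of size comparable to $H$. We then test equivalence of the returned Horn theory with $H$ clause by clause, using forward chaining in both directions. This test succeeds exactly when $\Int(\G)=\C_H$, so by \cref{thm:correctness} it decides the hard instances.

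The main obstacle is the budget bookkeeping in the direct argument for (ii). The budget for the SID run must be an a priori polynomial in $|\G|$ and $|H|$. We must justify that on yes-instances the Horn output is polynomially bounded; one route would be to take the output as a minimal or canonical base bounded in terms of $|H|$ and $|\G|$. We also need that on no-instances the output, whatever its size, is refuted by the equivalence test. This is precisely where I expect to fall back on citing Khardon's equivalence rather than reproving it.
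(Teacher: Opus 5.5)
Your proposal is correct and follows the paper's route: the paper obtains the corollary by combining the coNP-completeness of CMI (\cref{cor:cmi-final}) with Khardon's output-polynomial equivalence, and your direct argument for (i) is exactly the budgeted-simulation argument of \cref{lem:enumeration-transfer}. The bookkeeping you flag for a direct proof of (ii) is harmless: on yes-instances $\Sigma_\varphi$ itself represents $\Int(\G)$, so a shortest Horn output has size at most $|H|$, and \cref{cor:self-exact} guarantees that $\G$ is a genuine characteristic set, so any output produced within the budget correctly represents $\Int(\G)$ and your polynomial-time equivalence test refutes it on no-instances.
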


\subsection{Irreducible closed-set enumeration}

Defrain, Ohana, and Vilmin call the task of enumerating all irreducible closed sets from an implicational base \problemname{ICSEnum}~\cite{DefrainOhanaVilmin2025}. The output may be exponentially larger than the base, so output-polynomial total time is the baseline tractability notion.

\begin{corollary}[Irreducible Closed Sets Enumeration]
\label{cor:icsenum}
Unless $\mathrm P=\mathrm{NP}$, \problemname{ICSEnum} has no output-polynomial algorithm. The lower bound already holds for acyclic implicational bases with singleton conclusions and premises of size at most three; consequently it holds for acyclic convex geometries.
\end{corollary}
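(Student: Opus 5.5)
The plan is the standard argument that an output-polynomial enumerator yields a polynomial-time decider for the promise version of \problemname{Imp--MI Equality} from \cref{thm:main}. Suppose an algorithm $\mathcal E$ enumerates $\MI(\C_\Sigma)$ from $\Sigma$ in total time bounded by $p(|\Sigma|+|U|+N)$, where $p$ is a polynomial and $N$ is the total output size. Given a formula $\varphi$, I would build $(U,\Sigma_\varphi,\M_\varphi)$ as in \cref{lem:size}. Then I would run $\mathcal E$ on $\Sigma_\varphi$ for at most $T=p\bigl(|\Sigma_\varphi|+|U|+N_0\bigr)$ steps. Here $N_0$ is an upper bound on the encoding size of $\M_\varphi$ together with one extra set, for instance $(|\M_\varphi|+1)\cdot|U|$. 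This bound is polynomial in $|\varphi|$.

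The key step is the case split provided by \cref{prop:classification}. If $\varphi$ is unsatisfiable, then $\MI(\C_\varphi)=\M_\varphi$. The output size is therefore at most $N_0$, so $\mathcal E$ halts within $T$ steps and outputs exactly $\M_\varphi$. If $\varphi$ is satisfiable, then $\MI(\C_\varphi)\supsetneq\M_\varphi$. In this case, within $T$ steps one of three things happens:
\begin{enumerate}[label=\textup{(\roman*)}]
  \item $\mathcal E$ does not halt;
  \item $\mathcal E$ outputs some set not in $\M_\varphi$;
  \item $\mathcal E$ halts with an output that differs from $\M_\varphi$.
\end{enumerate}
Halting with output exactly $\M_\varphi$ is impossible, because the algorithm is correct and the true output is strictly larger. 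The decider accepts (declares $\varphi$ unsatisfiable) precisely when $\mathcal E$ halts within $T$ steps with output set equal to $\M_\varphi$. Checking this is polynomial, since $\M_\varphi$ has polynomially many sets of length $|U|$. This would place $3$-\textsc{UNSAT} in $\mathrm P$, so $\mathrm P=\mathrm{NP}$.

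The structural restrictions carry over directly from \cref{thm:main}(a)--(b). The instances $\Sigma_\varphi$ are acyclic, have singleton conclusions, and have premises of size at most three. For the convex-geometry claim, I would also observe that $\C_\varphi$ contains $\emptyset$: no rule has an empty premise, since $|A_k|\ge1$. An acyclic implicational base also defines an anti-exchange closure. Concretely, if $y,z\notin C$ and $z\in\cl(C\cup\{y\})$, then $\rho(z)>\rho(y)$ or $z$ is derived only through $y$. So $y\in\cl(C\cup\{z\})$ would contradict acyclicity via \cref{lem:locality}. Hence these systems are acyclic convex geometries in the sense of Defrain, Ohana, and Vilmin.

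I expect the main obstacle to be the convex-geometry bookkeeping rather than the time-bound argument. It must be made precise that every closure system defined by an acyclic base is a convex geometry, and that this matches the cited paper's definition of ``acyclic convex geometry''. I would first try the rank argument sketched above. Take $y\ne z$ outside a closed set $C$ with $z\in\cl(C\cup\{y\})$. By locality, $y\in\Anc(z)$. If also $y\in\cl(C\cup\{z\})$, then $z\in\Anc(y)$, which produces a cycle unless $y=z$. A second point to state carefully is that ``output-polynomial'' means polynomial in input plus total output. Enumerators that do not halt in time are then handled by the step cutoff $T$.
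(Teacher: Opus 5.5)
Your proposal is correct and follows the paper's route. The paper packages your time-cutoff simulation as \cref{lem:enumeration-transfer} and applies it with $I_\varphi=\Sigma_\varphi$ and $S_\varphi=\M_\varphi$, using \cref{prop:classification} exactly as you do. Your convex-geometry argument is the same standard observation the paper invokes: no empty premises makes $\emptyset$ closed, and $\Anc$-locality turns mutual derivability of distinct $y,z$ over a closed set into a directed cycle. The phrase ``or $z$ is derived only through $y$'' is loose, but your final $\Anc$-based version of that argument is sound.
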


The output-sensitive argument is given in \cref{app:enumeration}. The convex-geometric statement uses the standard observation that an acyclic base without empty premises defines a convex geometry: dependency locality turns a derivation of $u$ from a closed set plus $v$ into a directed path from $v$ to $u$, so acyclicity forbids the reverse derivation. In particular, \cref{cor:icsenum} rules out polynomial delay and incremental-polynomial time in the unrestricted acyclic class. It leaves intact the bounded premise- and conclusion-degree algorithms of~\cite{DefrainOhanaVilmin2025}.

\subsection{Pseudo-intents and the Duquenne--Guigues basis}

For a formal context $\K$, the Duquenne--Guigues basis contains one implication
\[
  P\to P''
\]
for every pseudo-intent $P$. It has minimum cardinality among all implication bases of the same closure system~\cite{GuiguesDuquenne1986,Kuznetsov2004}. Thus enumerating all pseudo-intents and constructing the canonical basis are polynomially interconvertible.

Recognition of a single pseudo-intent is coNP-complete~\cite{BabinKuznetsov2010}. Earlier enumeration work ruled out polynomial delay in prescribed orders and output-polynomial enumeration for restricted extremal subfamilies~\cite{DistelSertkaya2011,BabinKuznetsov2013}, but left unrestricted output-polynomial enumeration of all pseudo-intents open.

\begin{corollary}[Pseudo-intents and the DG basis]
\label{cor:pseudo-output}
Unless $\mathrm P=\mathrm{NP}$, there is no output-polynomial algorithm that, given an arbitrary finite formal context,
\begin{enumerate}[label=\textup{(\roman*)}]
  \item enumerates all pseudo-intents; or equivalently,
  \item computes its Duquenne--Guigues basis.
\end{enumerate}
\end{corollary}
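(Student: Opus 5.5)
The plan is to reduce from the Imp--MI hardness of \cref{thm:main}, via the meet-irreducible identification of \cref{cor:mi-identification-final}, using the standard correspondence between a context and its intent family. Suppose an algorithm $\mathcal A$ enumerates all pseudo-intents of any context $\K$ in time bounded by a polynomial $\pi(|\K|+\text{output})$. Given an instance $(\Sigma,\M)$ produced by \cref{thm:main}, build the context $\K_\M$ with one object per member of $\M$ and attribute set $U$, the object for $M$ having intent $M$. Its intents are exactly $\Int(\M)$. By the promise (c), $\Int(\M)\subseteq\C_\Sigma$, so every implication valid in $\C_\Sigma$ is valid in $\K_\M$. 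Hence equality $\C_\Sigma=\Int(\M)$ fails exactly when some implication valid in $\K_\M$ is not derivable from $\Sigma$. Equivalently, it fails exactly when the DG basis of $\K_\M$ contains some $P\to P''$ with $P''\not\subseteq\cl_\Sigma(P)$.

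The key steps, in order, are as follows.
\begin{enumerate}
\item Check that equality of the two closure systems holds iff every DG implication $P\to P''$ of $\K_\M$ satisfies $P''\subseteq\cl_\Sigma(P)$. One direction is immediate. For the other, note that the DG basis generates $\Int(\M)$. If each of its implications follows from $\Sigma$, then every $\Sigma$-closed set is $\K_\M$-closed, giving $\C_\Sigma\subseteq\Int(\M)$.
\item Run $\mathcal A$ on $\K_\M$ with a clock set to $\pi(|\K_\M|+N)$, where $N$ is the number of pseudo-intents. Since $N$ is unknown, use the size bound below: in a yes-instance it is polynomial in the input.
\item Test each emitted pseudo-intent $P$ in polynomial time by computing $P''$ via \cref{lem:int-membership} and $\cl_\Sigma(P)$. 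Reject as soon as a bad one appears. Also reject if the clock expires, which is justified only if yes-instances have few pseudo-intents.
\end{enumerate}

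The main obstacle is bounding $N$ for yes-instances, that is, showing that when $\C_\Sigma=\Int(\M)$ the DG basis of the closure system has size polynomial in $|\Sigma|$. I would first try the minimality property cited from \cite{GuiguesDuquenne1986}: the DG basis has minimum cardinality among all bases of its closure system. Then $\Sigma$ itself is a basis of $\Int(\M)=\C_\Sigma$, so $N\le|\Sigma|$. This makes the clock bound $\pi(|\K_\M|+|\Sigma|)$ explicit and polynomial in the instance.

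In no-instances, the algorithm either outputs a violating pseudo-intent before time runs out, or exceeds the clock, and both lead to rejection. Thus any output-polynomial $\mathcal A$ would decide a coNP-complete problem in polynomial time, giving $\mathrm P=\mathrm{NP}$. For (ii), an algorithm computing the DG basis lists its premises, so the same argument applies. The remaining details, namely that one can stop $\mathcal A$ early and inspect its partial output, are the routine output-sensitive transfer presumably given in the appendix.
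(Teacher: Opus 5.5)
Your proposal is correct, but it takes a more self-contained route than the paper. The paper uses the minimum-cardinality property of the Duquenne--Guigues basis only to bound the enumerator's output by $O(|U|s)$, where $s$ is the length of a shortest Horn representation of $\Int(\G)$. It concludes that a pseudo-intent enumerator would solve Khardon's Structure Identification problem output-polynomially. It then imports Khardon's reduction of CMI to Structure Identification and combines it with \cref{cor:cmi-final}.

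You instead carry out that reduction by hand on the specific hard instances. In yes-instances, $\Sigma$ itself is the short basis that gives $N\le|\Sigma|$. Promise~(c), through $\Int(\M)\subseteq\C_\Sigma$, turns the equivalence test between the computed DG basis and $\Sigma$ into the one-sided polynomial check $P''\subseteq\cl_\Sigma(P)$.

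What your route buys: it does not depend on Khardon's theorem, or on checking that his output measure matches. The paper has to add the caveat ``at least on the definite-Horn instances used in our reduction.'' Your route also makes the clocked simulation explicit. The paper's route is shorter and places the corollary inside Khardon's equivalence framework, with an SID algorithm as an intermediate product.

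Two small refinements:
\begin{itemize}
\item The clock should be set from the encoding length of the output, $O(|\Sigma|\,|U|)$ bits, rather than from the count $N$.
\item You need nothing about partial outputs or stopping $\mathcal A$ early. Your test is sound for any emitted set $P$: if $P''\not\subseteq\cl_\Sigma(P)$, then $\cl_\Sigma(P)\in\C_\Sigma\setminus\Int(\M)$. Otherwise $\cl_\Sigma(P)$ would be an intent containing $P$ and hence containing $P''$.
\end{itemize}
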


The derivation uses Khardon's output-sensitive equivalence and the cardinality minimality of the DG basis. The only nonautomatic point is that the canonical basis is polynomially bounded by a shortest permissible Horn output; this is recorded in \cref{app:pseudo}.

\subsection{Functional dependencies and Armstrong relations}

A functional dependency $X\to Y$ holding in a relation is the database analogue of an implication holding in a context. Babin and Kuznetsov showed that pseudo-intents correspond to premises of minimum covers of the dependencies valid in a relation, and explicitly identified unrestricted output-polynomial generation as an open problem~\cite{BabinKuznetsov2013}.

\begin{corollary}[Minimum covers from relations]
\label{cor:fd-cover}
Unless $\mathrm P=\mathrm{NP}$, the premises of a minimum functional-dependency cover of an explicit relation cannot be generated in output-polynomial time. In particular, no output-polynomial algorithm can compute a minimum cover from an arbitrary relation.
\end{corollary}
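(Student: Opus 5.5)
We plan to prove \cref{cor:fd-cover} by transferring the pseudo-intent lower bound of \cref{cor:pseudo-output} along the standard correspondence between formal contexts and relations, as in~\cite{BabinKuznetsov2013}. Given a formal context $\K=(G,M,I)$, we build in polynomial time a relation $r_\K$ over attribute set $M$ whose agreement sets are exactly the intents of $\K$. A concrete construction: take one tuple $t_0$ with all entries $0$, and for every object $g$ a tuple $t_g$ with $t_g[m]=0$ if $gIm$ and $t_g[m]=g$ (a fresh value) otherwise. Then $\Agr(t_0,t_g)=g'$, and $\Agr(t_g,t_h)=g'\cap h'$ for $g\ne h$, while the full attribute set arises trivially. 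The agreement-set family therefore generates under intersection exactly the intents of $\K$. Since $X\to Y$ holds in $r_\K$ iff every agreement set containing $X$ contains $Y$, the valid functional dependencies of $r_\K$ coincide with the implications valid in $\K$.

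Next we identify premises of minimum covers. By the Duquenne--Guigues theory quoted before \cref{cor:pseudo-output}, every minimum-cardinality base of a closure system has exactly one implication per pseudo-intent. For every minimum cover, the premise closures are the closures of the pseudo-intents, and the DG basis is itself a minimum cover. We will state the result of~\cite{BabinKuznetsov2013} that the premises of a canonical minimum cover (the DG cover) are exactly the pseudo-intents. So an algorithm generating minimum-cover premises from $r_\K$ yields $|\text{pseudo-intents}|$ premises. From a DG-type minimum cover one recovers the DG basis in polynomial time by the standard left-saturation and right-saturation steps. Hence an output-polynomial algorithm for minimum covers from relations would give one for the DG basis, contradicting \cref{cor:pseudo-output} unless $\mathrm P=\mathrm{NP}$.

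We expect the main obstacle to be the output accounting. The output of the FD algorithm must be polynomially related to the number of pseudo-intents. Any minimum cover has exactly as many implications as there are pseudo-intents, and each implication has size at most $|M|$. So the output size is within a factor $|M|$ of the DG basis size, and the running-time bound transfers. The saturation step converting an arbitrary minimum cover into the DG basis also needs care. We would argue it as follows: closing premises under the cover minus the current implication yields the pseudo-closure, which is polynomial per implication, and the result is the DG basis by its uniqueness among minimum bases in this saturated form.
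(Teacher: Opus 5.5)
Your proposal is correct and follows the paper's route. The paper derives \cref{cor:fd-cover} from \cref{cor:pseudo-output} through the Babin--Kuznetsov correspondence between pseudo-intents and minimum-cover premises~\cite{BabinKuznetsov2013}; your context-to-relation construction (essentially \cref{lem:relation-realization}), output accounting, and saturation step for an arbitrary minimum cover make explicit what the paper leaves to that citation. The saturation claim you call standard does hold. First recover conclusions as $A\to A''$ from the relation. Then the left-saturated premise $\cl_{\Sigma\setminus\{A\to A''\}}(A)$ is quasi-closed, lies inside the unique pseudo-intent that $A$ serves, and has the same closure, so it equals that pseudo-intent.
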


This should be contrasted with Maier's polynomial-time algorithm for minimizing an FD set that is already supplied explicitly~\cite{Maier1979}. The intractability concerns discovering the implicit dependency theory of a relation, not simplifying a given rule set.

We also obtain a decision result that does not follow by terminology alone. Let $F_R$ denote all FDs valid in a relation $R$, and let $F^+$ denote the consequences of an FD set $F$.

\begin{problem}[FD--Relation Equivalence]
Given a finite relation $R$ and a finite FD set $F$ on the same attributes, decide whether
\[
  F_R=F^+.
\]
Equivalently, decide whether $R$ is an Armstrong relation for $F$.
\end{problem}

Gottlob and Libkin highlighted the complexity of this problem as open~\cite{GottlobLibkin1990}. An explicit agreement-set realization of our hard instances gives the following theorem; the reduction is in \cref{app:fd}.

\begin{theorem}[FD--Relation Equivalence]
\label{thm:fd-equivalence}
\problemname{FD--Relation Equivalence} is coNP-complete. Hardness holds even under the promise
\[
  F^+\subseteq F_R,
\]
that is, even when $R$ already satisfies every dependency in $F$ and equivalence can fail only because $R$ satisfies additional dependencies.
\end{theorem}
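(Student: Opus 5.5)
Membership in coNP is straightforward. A no-certificate is a pair of tuples $t,t'\in R$ together with a dependency $X\to b$ such that $t,t'$ agree on $X$ but not on $b$, while $X\to b\in F^+$. Alternatively, it is a set $X$ and an attribute $b$ with $b\in\cl_F(X)$ and $X\to b\notin F_R$. Both conditions are checkable in polynomial time by forward chaining. The converse failure, $X\to b\in F_R\setminus F^+$, is also polynomially checkable: $X\to b\in F_R$ holds iff $b\in\bigcap\set{\Agr(t,t')\suchthat X\subseteq\Agr(t,t')}$, where $\Agr(t,t')$ is the agreement set. So $F_R\neq F^+$ has a polynomial witness $(X,b)$.

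For hardness I plan to reduce from the hard instances of \cref{thm:main}. Take $F=\Sigma_\varphi$, viewed as FDs $A\to b$. I will build a relation $R_\varphi$ whose agreement sets are exactly the supplied sets $\M_\varphi$, together with $U$ itself (pairs of identical tuples, or trivially). The standard construction is one base tuple $t_0$ that is all zeros, and for each $M\in\M_\varphi$ a tuple $t_M$ that agrees with $t_0$ exactly on $M$ and uses a fresh value, unique to that tuple, on every attribute outside $M$. Then $\Agr(t_0,t_M)=M$. For two non-base tuples, $\Agr(t_M,t_{M'})=M\cap M'$, since fresh values never coincide. Hence the closure system of $F_R$ is $\Int(\set{\Agr})=\Int(\M_\varphi)$, using the classical fact that $X\to b\in F_R$ iff every agreement set containing $X$ contains $b$. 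By \cref{lem:int-membership}, this says exactly that the $F_R$-closed sets are the members of $\Int(\M_\varphi)$. The relation has $|\M_\varphi|+1$ tuples over $|U|$ attributes, so it is polynomial by \cref{lem:size}.

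Then $F_R=F^+$ iff the two closure systems coincide, i.e.\ iff $\C_{\Sigma_\varphi}=\Int(\M_\varphi)$. By \cref{thm:correctness}, this holds iff $\varphi$ is unsatisfiable. For the promise $F^+\subseteq F_R$: every supplied set lies in $\MI(\C_\varphi)\subseteq\C_\varphi$, so $\Int(\M_\varphi)\subseteq\C_\varphi$. Fewer closed sets means more valid implications, so $F^+\subseteq F_R$. Equivalently, each agreement set is $\Sigma_\varphi$-closed, so $R$ satisfies every rule.

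The only step needing care is the agreement-set bookkeeping. I must check that no unintended agreements arise, which the fresh constants guarantee, and that intersections of pairs already realize everything needed. They do, since the FD closure only depends on the family of agreement sets through $\Int$. I also need the translation between FD validity and intersection closure to be stated precisely, including that the empty intersection is $U$, so I will state it as a small lemma.
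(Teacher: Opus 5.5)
Your proposal is correct and follows the paper's proof essentially step for step. It uses the same certificate-based coNP membership, the same all-zero-plus-fresh-values realization of $\M_\varphi$ as agreement sets (\cref{lem:relation-realization}), the equivalence $F_R=F^+\iff\C_{\Sigma_\varphi}=\Int(\M_\varphi)$, and derives the promise from $\Int(\M_\varphi)\subseteq\C_\varphi$. The only blemish is your opening phrase that the agreement sets are \emph{exactly} $\M_\varphi\cup\{U\}$: they also include the pairwise intersections $M\cap M'$, as you compute a line later, and this does not affect $\Int(\Agr(R_\varphi))=\Int(\M_\varphi)$.
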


\section{Conclusion}

The proof isolates a simple source of hardness: completeness of a canonical representation. The supplied sets are all valid, and collectively form an exact irreducible representation of another closure system. Nevertheless, deciding whether further canonical objects exist is coNP-complete.

This completeness barrier propagates through several classical equivalences. It closes the unrestricted output-polynomial question for irreducible closed sets, characteristic models, pseudo-intents, the Duquenne--Guigues basis, and minimum FD-cover premises; it also classifies FD--Relation Equivalence. At the same time, the result preserves a meaningful tractability frontier: bounded occurrence degree can still support efficient enumeration, while acyclicity and bounded premise size alone do not.

\appendix

\section{Output-sensitive consequences}
\label{app:enumeration}

We include the output-sensitive details because a decision lower bound does not automatically imply an enumeration lower bound when no-instances may have exponentially larger outputs.

\begin{lemma}[Completeness testing from an output-polynomial enumerator]
\label{lem:enumeration-transfer}
Let $L$ be a coNP-hard language. Suppose a polynomial-time reduction maps every input $w$ to an enumeration instance $I_w$ and an explicitly listed family $S_w$ such that:
\begin{enumerate}[label=\textup{(\roman*)}]
  \item $S_w$ is contained in the solution set $\operatorname{Sol}(I_w)$;
  \item $|I_w|+|S_w|$ is polynomial in $|w|$;
  \item $w\in L$ if and only if $\operatorname{Sol}(I_w)=S_w$.
\end{enumerate}
Then the enumeration problem has no output-polynomial algorithm unless $\mathrm P=\mathrm{NP}$.
\end{lemma}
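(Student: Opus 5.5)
Suppose an enumeration algorithm $\mathcal E$ lists $\operatorname{Sol}(I)$ in total time at most $p(|I|+|\operatorname{Sol}(I)|)$ for some polynomial $p$. Here $|\operatorname{Sol}(I)|$ denotes the total encoding length of the output. We assume without loss of generality that $p$ is monotone. The plan is to build a polynomial-time decider for $L$. Since $L$ is coNP-hard, such a decider puts coNP, and hence NP, inside P. The key device is the standard \emph{clock-and-cut-off} simulation. It is needed because on no-instances the output may be exponentially large, so we cannot afford to run $\mathcal E$ to completion.

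Given $w$, first compute $I_w$ and $S_w$ in polynomial time. By (ii), $N_w:=|I_w|+|S_w|$ is bounded by a polynomial $q(|w|)$. Then run $\mathcal E$ on $I_w$ for at most $T_w:=p(|I_w|+\|S_w\|)$ steps, where $\|S_w\|$ is the encoding length of the listed family. While it runs, record the solutions it emits. There are three outcomes.

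(a) $\mathcal E$ does not halt within $T_w$ steps. Then $\operatorname{Sol}(I_w)\neq S_w$, since otherwise the output size would be $\|S_w\|$ and the time bound would force halting. Reject.

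(b) $\mathcal E$ emits some solution not in $S_w$. Then the two families differ. Reject. This check is polynomial because $S_w$ is explicitly listed and every emitted object has length at most $T_w$.

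(c) $\mathcal E$ halts within the budget and emits only members of $S_w$. Then its complete output is a subset of $S_w$. Hypothesis (i) gives $S_w\subseteq\operatorname{Sol}(I_w)$, and the output equals $\operatorname{Sol}(I_w)$, so equality holds. Accept.

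By (iii), these decisions are exactly membership in $L$, and the total running time is $O(\operatorname{poly}(T_w))$, which is polynomial in $|w|$.

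The main subtlety is case (a). We must make sure the budget uses only the \emph{promised} output size $\|S_w\|$, which is polynomial, and never the true output size. Monotonicity of $p$ then guarantees that a yes-instance always finishes in time. Hypothesis (i) is what lets case (c) conclude equality rather than mere containment. Without it, an enumerator that halts early and lists a proper subset of $S_w$ could not be distinguished from the correct situation; with it, the emitted set and $S_w$ sandwich $\operatorname{Sol}(I_w)$.

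Finally, as a remark on applying the lemma: for \problemname{ICSEnum} one takes $I_w=\Sigma_\varphi$ and $S_w=\M_\varphi$. Then \cref{prop:classification} supplies (i) and (iii), and \cref{lem:size} supplies (ii).
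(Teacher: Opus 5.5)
Your proposal is correct and follows the paper's proof: you clock the enumerator at the polynomial bound $p(|I_w|+\|S_w\|)$ that a yes-instance must meet, reject on timeout, and otherwise compare the completed output with $S_w$. The one difference is that you use (i) to turn the final comparison into a one-sided check, whereas the paper tests equality with $S_w$ directly and never needs (i); this also makes your closing remark overstated, because a correct enumerator that has halted has output all of $\operatorname{Sol}(I_w)$, so a plain equality test already detects a proper subset of $S_w$.
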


\begin{proof}
Assume an enumerator runs in time $p(|I|+|\operatorname{Sol}(I)|)$ for a fixed polynomial $p$. On a yes-instance $w\in L$, the complete output equals the polynomial-size family $S_w$, so the enumerator terminates within a polynomial bound in $|w|$. Simulate it for that bound. If it has not terminated, reject. If it terminates, compare its complete output with $S_w$ and accept exactly when they agree. On a no-instance, either the run has not completed within the bound or its completed output differs from $S_w$. This decides the coNP-hard language $L$ in polynomial time, implying $\mathrm P=\mathrm{coNP}$ and hence $\mathrm P=\mathrm{NP}$.
\end{proof}

\begin{proof}[Proof of \cref{cor:icsenum}]
Apply \cref{lem:enumeration-transfer} to the reduction of \cref{thm:main}, with $I_\varphi=\Sigma_\varphi$ and $S_\varphi=\M_\varphi$. By \cref{prop:classification}, every supplied set is a genuine irreducible closed set, and equality with the full output holds exactly when $\varphi$ is unsatisfiable. The reduction has polynomial size and its bases are acyclic with premises of size at most three.
\end{proof}

\section{Pseudo-intents and the output measure}
\label{app:pseudo}

Let $\G\subseteq 2^U$ contain $U$, and form the context
\[
  \K_\G=(\G,U,\in),
\]
whose objects are the sets in $\G$ and whose object intent corresponding to $G\in\G$ is $G$ itself. Its intent closure system is exactly $\Int(\G)$.

Suppose that an output-polynomial algorithm enumerates all pseudo-intents of an arbitrary context. From each pseudo-intent $P$ we compute $P''$ in polynomial time and obtain the Duquenne--Guigues basis of $\K_\G$. The DG basis has one implication per pseudo-intent and minimum cardinality among all bases~\cite{GuiguesDuquenne1986}. If a shortest Horn representation of $\Int(\G)$ has encoded length $s$, it contains at most $s$ clauses, so the DG basis has at most $s$ implications. Each premise and conclusion is a subset of $U$ and has an $O(|U|)$-bit encoding. Thus the complete pseudo-intent list and the DG basis have length $O(|U|s)$; splitting set-valued conclusions into singleton conclusions costs at most another factor $|U|$.

Consequently, the hypothetical enumerator would solve Khardon's Structure Identification problem in time polynomial in the input and in a shortest permissible Horn output, at least on the definite-Horn instances used in our reduction. Khardon proved that CMI reduces output-polynomially to Structure Identification~\cite{Khardon1995}. Together with \cref{cor:cmi-final}, this proves \cref{cor:pseudo-output}. The correspondence between pseudo-intents and premises of minimum FD covers established by Babin and Kuznetsov~\cite{BabinKuznetsov2013} then yields \cref{cor:fd-cover}.

\section{Functional dependencies and agreement sets}
\label{app:fd}

We prove \cref{thm:fd-equivalence}. Let $R$ be a finite relation on attribute set $U$. For tuples $s,t\in R$, define their agreement set
\[
  \Agr(s,t)=\set{a\in U\suchthat s(a)=t(a)},
\]
and let
\[
  \Agr(R)=\set{\Agr(s,t)\suchthat s,t\in R}.
\]

\begin{lemma}[Agreement-set representation]
\label{lem:agreement}
The closure system determined by all functional dependencies valid in $R$ is
\[
  \C_{F_R}=\Int(\Agr(R)).
\]
\end{lemma}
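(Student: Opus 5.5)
The plan is to prove the two inclusions $\Int(\Agr(R))\subseteq\C_{F_R}$ and $\C_{F_R}\subseteq\Int(\Agr(R))$. Throughout, we use the standard semantics: $X\to b$ holds in $R$ exactly when any two tuples agreeing on $X$ also agree on $b$. The closure operator $\cl_{F_R}$ sends $X$ to the set of all $b$ with $X\to b\in F_R$; this set is closed because valid FDs are closed under Armstrong's axioms.

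\textbf{Agreement sets are closed.} We first show that every agreement set is closed. Let $A=\Agr(s,t)$ and suppose $X\to b$ is valid in $R$ with $X\subseteq A$. Then $s$ and $t$ agree on $X$, so they agree on $b$, and hence $b\in A$. This includes the degenerate case $s=t$, which gives $A=U$. Since $\C_{F_R}$ is a closure system, it is closed under intersections and contains $U$. Therefore $\Int(\Agr(R))\subseteq\C_{F_R}$. If $R$ is empty, $\Agr(R)$ is empty, $\Int$ gives only $U$, and every FD holds, so both sides equal $\{U\}$.

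\textbf{Closed sets are intersections of agreement sets.} For the reverse inclusion, fix $C\in\C_{F_R}$. By \cref{lem:int-membership}, it suffices to show $J(C)=C$, where $J(C)$ is the intersection of all agreement sets containing $C$. So take $b\notin C$. Since $C$ is closed, $C\to b$ is not valid in $R$. Hence there exist tuples $s,t$ that agree on $C$ but disagree on $b$. Their agreement set $\Agr(s,t)$ contains $C$ and omits $b$. Thus $b\notin J(C)$, and $J(C)=C$.

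\textbf{Main obstacle.} There is no real obstacle here. The only care needed is the bookkeeping around $U$: pairs with $s=t$ (or the empty-intersection convention) ensure $U\in\Int(\Agr(R))$. We also need to state explicitly that the closure associated with $F_R$ has as closed sets exactly those $C$ for which no valid FD $X\to b$ has $X\subseteq C$ and $b\notin C$. This holds because $F_R$ is closed under implication, so single-step closedness and full closedness coincide.
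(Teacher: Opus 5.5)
Your proof is correct and follows essentially the paper's argument: both rest on the fact that $X\to b$ holds in $R$ exactly when every agreement set containing $X$ contains $b$, so $\cl_{F_R}$ coincides with $J_{\Agr(R)}$ and \cref{lem:int-membership} identifies its fixed points as $\Int(\Agr(R))$. The paper states this directly as an identification of the closure operator, while you split the same observation into two inclusions; that is only a difference in presentation.
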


\begin{proof}
By definition, a singleton-right-hand-side dependency $X\to a$ holds in $R$ exactly when every pair of tuples agreeing on $X$ also agrees on $a$, equivalently when every agreement set containing $X$ also contains $a$. Hence the closure of $X$ under $F_R$ is
\[
  \bigcap\set{A\in\Agr(R)\suchthat X\subseteq A}.
\]
The fixed points of this closure operator are precisely the intersections of agreement sets. General right-hand sides follow by decomposition.
\end{proof}

\begin{lemma}[Polynomial realization of an intersection family]
\label{lem:relation-realization}
For every family $\M\subseteq 2^U$, one can construct in polynomial time a relation $R_\M$ with at most $|\M|+1$ tuples such that
\[
  \Int(\Agr(R_\M))=\Int(\M).
\]
\end{lemma}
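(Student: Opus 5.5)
We use the classical Armstrong-relation construction: one base tuple, plus one tuple per member of $\M$ that agrees with the base tuple exactly on that member. Fix a base tuple $t_0$ with $t_0(a)=0$ for every $a\in U$. Enumerate $\M=\{M_1,\ldots,M_r\}$ and, for each $i$, define
\[
  t_i(a)=\begin{cases}0,&a\in M_i,\\ i,&a\notin M_i.\end{cases}
\]
Let $R_\M=\{t_0,t_1,\ldots,t_r\}$, with duplicate tuples merged. This relation has at most $|\M|+1$ tuples and $|U|$ attributes, with values bounded by $|\M|$, so it is built in polynomial time.

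The agreement sets are computed directly. First, $\Agr(t_0,t_i)=M_i$. Second, for $i\neq j$ the tuples $t_i$ and $t_j$ agree on $a$ only when both values are $0$, because their nonzero values $i\ne j$ never coincide, and a $0$ never matches a nonzero value. Hence $\Agr(t_i,t_j)=M_i\cap M_j$. Third, a pair consisting of a tuple with itself has agreement set $U$, which is the empty intersection. Together these give
\[
  \M\subseteq\Agr(R_\M)\subseteq\M\cup\set{M_i\cap M_j}\cup\{U\}\subseteq\Int(\M).
\]

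Since $\Int$ is monotone and idempotent, i.e. $\Int(\Int(\M))=\Int(\M)$, applying $\Int$ across this chain yields $\Int(\Agr(R_\M))=\Int(\M)$. The treatment of self-pairs needs one remark. If the definition of $\Agr(R)$ is read as excluding them, $U$ may fail to lie in $\Agr(R_\M)$, but that changes nothing: $U$ is the empty intersection and belongs to every $\Int(\cdot)$. Likewise, if $\M$ is empty, the relation has the single tuple $t_0$, and both sides equal $\{U\}$.

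The only step that needs care is the cross-pair computation $\Agr(t_i,t_j)=M_i\cap M_j$. Using a distinct nonzero value for each row is what guarantees that two rows agree only on the attributes where both carry $0$. The equality is then used with Lemma~\ref{lem:agreement} to transfer the hard instances $\M_\varphi$ of the main theorem into relations.
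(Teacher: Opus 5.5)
Your proof is correct and is essentially the paper's construction: an all-zero base tuple plus one tuple per member of $\M$ that agrees with it exactly on that member, so that base pairs give $M$, cross pairs give $M\cap N$, and self-pairs give $U$. The differences are cosmetic and both versions work: you use one distinct nonzero value per row where the paper uses a fresh symbol $\langle M,a\rangle$ per cell, and you handle $U\in\M$ by merging duplicate tuples where the paper simply omits $U$.
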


\begin{proof}
Create a distinguished tuple $t_0$ and one tuple $t_M$ for every $M\in\M\setminus\{U\}$. Put $t_0(a)=0$ for every attribute $a$, and define
\[
  t_M(a)=
  \begin{cases}
    0, & a\in M,\\
    \langle M,a\rangle, & a\notin M,
  \end{cases}
\]
where every symbol $\langle M,a\rangle$ is fresh and is encoded by the pair of indices of $M$ and $a$. Then
\[
  \Agr(t_0,t_M)=M
\]
and, for distinct $M,N$,
\[
  \Agr(t_M,t_N)=M\cap N.
\]
Equal tuple pairs contribute $U$. Thus every agreement set belongs to $\Int(\M)$, while every member of $\M\setminus\{U\}$ occurs as an agreement set and $U$ is present automatically. Taking intersection closures gives the claim, and the relation has polynomial encoding length.
\end{proof}

\begin{proof}[Proof of \cref{thm:fd-equivalence}]
Membership in coNP follows from a direct certificate. If $R$ fails some dependency in the supplied set $F$, two tuples witnessing that violation suffice. Otherwise, inequivalence is certified by an FD $X\to a$ that holds in $R$ but is not implied by $F$. Validity in $R$ is checked over all tuple pairs, and nonimplication is checked by verifying $a\notin\cl_F(X)$.

For hardness, take an instance $(U,\Sigma,\M)$ produced by \cref{thm:main}. Regard every implication $A\to b$ in $\Sigma$ as the singleton-right-hand-side FD $A\to b$, obtaining $F_\Sigma$. Construct $R_\M$ by \cref{lem:relation-realization}. Then \cref{lem:agreement,lem:relation-realization} gives
\[
  \C_{F_{R_\M}}=\Int(\M),
  \qquad
  \C_{F_\Sigma}=\C_\Sigma.
\]
Two FD theories are equal exactly when their closure operators, equivalently their families of closed attribute sets, are equal. Therefore
\[
  F_{R_\M}=F_\Sigma^+
  \quad\Longleftrightarrow\quad
  \Int(\M)=\C_\Sigma.
\]
This is a polynomial reduction.

Finally, the promised hard instances satisfy $\M\subseteq\C_\Sigma$, because every supplied set is a meet-irreducible of $\C_\Sigma$. Since $\C_\Sigma$ is intersection-closed,
\[
  \Int(\M)\subseteq\C_\Sigma.
\]
Inclusion of closure systems reverses inclusion of their valid implication theories, so
\[
  F_\Sigma^+\subseteq F_{R_\M}.
\]
Thus $R_\M$ satisfies every dependency in $F_\Sigma$, and equivalence can fail only through additional dependencies valid in the relation.
\end{proof}

\end{document}